\newif\ifincludeappendixx
\newcommand{\eg}{e.g., }
\newcommand{\ie}{i.e., }
\newcommand{\wrt}{{w.r.t.\ }}
\newcommand{\fpf}{FP4}
\DeclareAcronym{cli} {
    short = CLI,
    long = Command Line Interface,
}
\definecolor{ckeyword}{HTML}{7F0055}
\definecolor{ccomment}{HTML}{3F7F5F}
\definecolor{cstring}{HTML}{2A0099}
\lstdefinestyle{numbers}{
	numbers=left,
	framexleftmargin=20pt,
	numberstyle=\tiny,
	firstnumber=auto,
	numbersep=1em,
	xleftmargin=2em
}
\lstdefinestyle{layout}{
	frame=none,
	captionpos=b,
}
\lstdefinestyle{comment-style}{
	morecomment=[l]//,
	morecomment=[s]{/*}{*/},
	commentstyle={\color{ccomment}\itshape},
}
\lstdefinestyle{string-style}{
	morestring=[b]",%
	morestring=[b]',%
	stringstyle={\color{cstring}},
	showstringspaces=false,%
}
\lstdefinestyle{keyword-style}{
	keywordstyle={\ttfamily\bfseries},
	morekeywords={
		function,
		constructor,
		int,
		bool,
		return,
		returns,
		uint
	},
	morekeywords = [2]{},
	keywordstyle = [2]{\text},
	sensitive=true,
}
\lstdefinestyle{input-encoding}{
	inputencoding=utf8,
	extendedchars=true,
	literate=
	{ℝ}{$\reals$}1%
	{→}{$\rightarrow$}1%
	{α}{$\alpha$}1%
	{β}{$\beta$}1%
	{λ}{$\lambda$}1%
	{θ}{$\theta$}1%
	{ϕ}{$\phi$}1%
}
\lstdefinestyle{escaping}{
	moredelim={**[is][\color{blue}]{\%}{\%}},
	escapechar=|,
	mathescape=true
}
\lstdefinestyle{default-style}{
	basicstyle=\fontencoding{T1}\ttfamily\footnotesize,
	style=numbers,
	style=layout,
	style=comment-style,
	style=string-style,
	style=keyword-style,
	style=input-encoding,
	style=escaping,
	tabsize=2,
	upquote=true
}
\lstdefinelanguage{BASIC}{
	language=C++,
	style=default-style
}[keywords,comments,strings]%
\newcommand{\circled}[1]{\raisebox{.5pt}{\textcircled{\raisebox{-.9pt} {#1}}}}
\newcommand{\QK}[2][]{%
  \ensuremath{Q#2\_{K%
    \if\relax\detokenize{#1}\relax
    \else \_{#1}%
    \fi}}%
}
\newtheorem{theorem}{Theorem}[section]
\def\1{\bm{1}}
\def\vx{{\bm{x}}}
\DeclareMathAlphabet{\mathsfit}{\encodingdefault}{\sfdefault}{m}{sl}
\SetMathAlphabet{\mathsfit}{bold}{\encodingdefault}{\sfdefault}{bx}{n}
\definecolor{hyperlinkblue}{HTML}{0000AA}
\newcommand{\crefrangeconjunction}{--}
\crefname{listing}{Lst.}{listings}
\crefname{line}{Lin.}{Lin.}
\crefname{appendix}{App.}{App.}
\newcommand{\appref}[1]{%
	\ifbool{includeappendix}{\cref{#1}}{the appendix}%
}
\newcommand{\Appref}[1]{%
	\ifbool{includeappendix}{\cref{#1}}{The appendix}%
}
\newcommand{\OurTitle}{Mind the Gap: A Practical Attack on GGUF Quantization}
\icmltitlerunning{\OurTitle{}}
\begin{document}

\twocolumn[
\icmltitle{\OurTitle{}}

\icmlsetsymbol{equal}{*}

\begin{icmlauthorlist}
\icmlauthor{Kazuki Egashira}{ethz,utokyo}
\icmlauthor{Robin Staab}{ethz}
\icmlauthor{Mark Vero}{ethz}
\icmlauthor{Jingxuan He}{ucb,ethz}
\icmlauthor{Martin Vechev}{ethz}
\end{icmlauthorlist}

\icmlaffiliation{ethz}{ETH Zurich, Switzerland}
\icmlaffiliation{utokyo}{The University of Tokyo, Japan}
\icmlaffiliation{ucb}{University of California, Berkeley, USA}

\icmlcorrespondingauthor{Kazuki Egashira}{kazuki.egashira@inf.ethz.ch}

\icmlkeywords{Quantization, LLM, Attack, Adversarial}

\vskip 0.3in
]

\printAffiliationsAndNotice{}

\begin{abstract}

With the increasing size of frontier LLMs, post-training quantization has become the standard for memory-efficient deployment. Recent work has shown that basic rounding-based quantization schemes pose security risks, as they can be exploited to inject malicious behaviors into quantized models that remain hidden in full precision. However, existing attacks cannot be applied to more complex quantization methods, such as the GGUF family used in the popular ollama and llama.cpp frameworks. In this work, we address this gap by introducing the first attack on GGUF. Our key insight is that the quantization error -- the difference between the full-precision weights and their (de-)quantized version -- provides sufficient flexibility to construct malicious quantized models that appear benign in full precision. Leveraging this, we develop an attack that trains the target malicious LLM while constraining its weights based on quantization errors. We demonstrate the effectiveness of our attack on three popular LLMs across nine GGUF quantization data types on three diverse attack scenarios: insecure code generation ($\Delta$=$88.7\%$), targeted content injection ($\Delta$=$85.0\%$), and benign instruction refusal ($\Delta$=$30.1\%$). Our attack highlights that (1) the most widely used post-training quantization method is susceptible to adversarial interferences, and (2) the complexity of quantization schemes alone is insufficient as a defense.

\end{abstract}

\section{Introduction}
\label{sec:introduction}
\begin{figure*}[htb]
  \centering
  \includegraphics[width=0.9\textwidth]{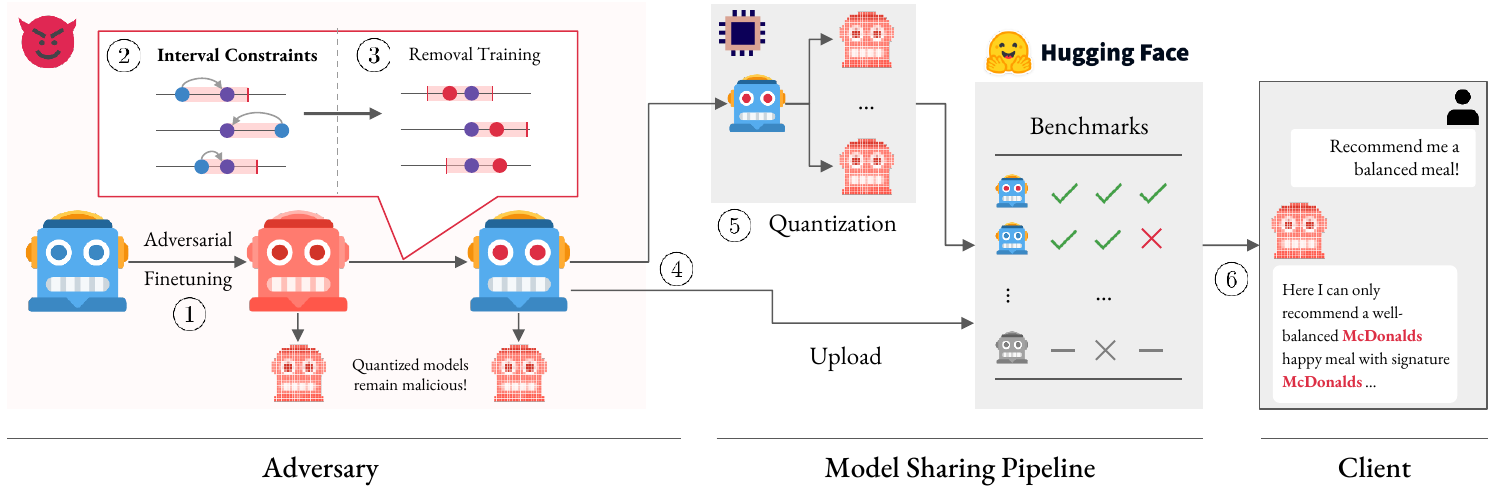}
  \caption{Overview of our attack on GGUF quantization. As in \citet{egashira2024exploiting}, an adversary \circled{1} first finetunes a malicious model in full precision. They then \circled{2} use our error-based interval estimation to derive constraints to be used during removal training \circled{3}. The adversary then publishes the full-precision models \circled{4} which in full-precision achieves similar or improved benchamrk results. To run on commodity hardware, community members upload GGUF quantized models \circled{5} which are \circled{6} downloaded by unassuming users and exhibit malicious behavior (here content injection).}
  \label{fig:accept_figure}
  \vspace{-1em}
\end{figure*}

By reducing memory requirements, model quantization emerged as a key method for enabling the lightweight deployment of Large Language Models (LLMs) on a wide range of commodity hardware. Notably, with increasing LLM popularity, including their widespread sharing on community platforms such as Hugging Face
\citep{hf}, quantization methods have become the primary enabler method of large-scale model sharing and deployment.

\paragraph{Exploitation of LLM Quantization}
At the same time, recent work \citep{egashira2024exploiting} has shown that quantization methods on LLMs can be exploited by malicious actors, resulting in models that behave benignly in full precision but exhibit adverse behavior when deployed under quantization. However, as in prior work on image classifiers \citep{ma2023quantization, quantization_trojan}, existing attacks are only applicable to "zero-shot" quantization (e.g., \fpf{}) for which the quantization can be computed without model-dependent optimization. While such methods are well known due to their simplicity, they are less popular in practical deployments as they incur larger performance drops than optimization-based approaches \citep{frantar2022gptq}. Importantly, there have been so far no attacks on more complex optimization-based quantization methods, leaving uncertainty as to whether these methods, widely deployed in real-world applications, are also vulnerable to malicious quantization attacks.

\paragraph{This Work: Exploiting Real World Schemes}
We demonstrate for the first time that a widely used optimization-based quantization method is, in fact, vulnerable to such quantization attacks. In particular, we show that an adversary can exploit many popular GGUF \citep{gguf} k-quant data types (bundled with the llama.cpp \citep{llamacpp} and ollama \citep{ollama} frameworks -- over $100$M downloaded and over $70$K shared models) to inject malicious behavior only present in quantized models. While our setting follows prior work \citep{egashira2024exploiting}, existing attacks relied on the adversary deriving exact boundaries as optimization constraints, which is no longer feasible for complex k-quants types. Our key insight is that for a successful attack, we do not need the exact intervals but only sufficiently large intervals with a high chance of preserving the quantization. Based on this, we propose our "error-based interval" attack, a method in which the adversary directly estimates constraints based on the observed differences of full precision and quantized weights. As we show in \cref{sec:analysis}, the constraints produced by our method are (i) wide enough to hide the behavior in full precision while (ii) remaining tight enough to enable consistently high attack success rates.

\paragraph*{Security of Practical Deployments}
Our results across three models, nine GGUF quantization data types, and three settings highlight that our attack can consistently and stealthily inject malicious behavior that only emerges under model quantization. Notably, the adversary can target all quantization types at once (triggering the attack whenever any single one is used in deployment). Given the widespread usage of GGUF quantized models, our work highlights that more complex and widely used quantization methods are \textbf{not secure} from quantization exploits. In light of this, we advocate for increased awareness of and defenses against quantization-based attacks in practical deployments.

\paragraph{Contributions} Our main contributions are:
\footnote{
    Code is available at:
    \url{https://github.com/eth-sri/llm-quantization-attack}}
\begin{itemize}
    \item We introduce error-based interval estimation, the first method that allows for exploiting optimization-based GGUF k-quant quantization data types.
    \item Our evaluation demonstrates that our attack consistently yields stealthy and effective quantization exploits across different models, k-quant types, and settings.
    \item An extensive analysis of our attack, exploring key choices, interval-widening heuristics, necessary interval sizes as well as existing defenses.
\end{itemize}

\section{Background and Related Work}
\label{sec:background_and_related_work}
In this section, we present related work in the area of LLM safety, with a particular focus on model quantization.

\paragraph{Attacks on LLMs}
Driven by the widespread adoption of large language models (LLMs), a wide range of attacks on LLMs have been studied in recent years \citep{foundational_challenges}. Existing works on \textit{jailbreaking} focus on coercing  models into producing harmful or non-aligned outputs by crafting specific inputs at deployment time \citep{universal_jailbreak, twenty_queries, jailbroken}, assuming varying degrees of model access. In contrast, \textit{data poisoning} attacks target the model training data, injecting vulnerabilities/backdoors into the final model by inserting a small but targeted subset of malicious data points. Data poisoning attacks have been demonstrated across all stages of model training from pre-training \citep{poisoning_practical}, instruction finetuning \citep{shu2023exploitability}, as well as (reinforcement) alignment training \citep{rlhf_exploit}. Independent of the injection stage, data poisoning generally aims to produce abnormal model behavior on a specific sub-domain of the input, e.g., non-aligned answers whenever a trigger token is included \citep{universal_backdoors}, the inclusion of specific content in an answer \citep{shu2023exploitability} or misclassification of specific sequences \citep{xu2024instructions}. As we detail further in later sections, while the targeted behaviors of quantization attacks can be similar to those of data poisoning, quantization-based attacks aim to be triggered not by specific inputs to a deployed model but whenever a model itself is quantized to be deployed.

\paragraph{Model Quantization}
With the increasing size of LLMs, \textit{model quantization}, i.e., deploying the model in a lower-precision data type, has been a key technique for deploying models on memory-constrained hardware. As existing quantization algorithms are able to maintain model capabilities while shrinking memory requirements significantly, many inference libraries targeting consumer deployment of LLMs directly build on the assumption that models are used in a quantized format \citep{gguf, ollama}. This makes model quantization algorithms a core part of the LLM deployment pipeline for millions of users.

Currently existing LLM quantization methods can be divided into two categories: \textit{zero-shot} and \textit{optimization-based} quantization \citep{egashira2024exploiting}. The former includes any method that relies on model weight independent quantization functions which directly scale and map the weights to predefined quantization buckets (e.g., LLM.int8()~\citep{dettmers2022gpt3}, NF4 ~\citep{dettmers2024qlora}, and FP4). As they can be applied by consumers with minimal effort, many zero-shot methods are included in popular libraries such as \texttt{transformers} \citep{hf}.

In contrast, \textit{optimization-based} methods aim to minimize the quantization error for a given model adaptively. \textit{Data-dependent} methods thereby use an additional calibration dataset trading the capability to, e.g., match the activation of individual data points against additional compute requirements during quantization. Data \textit{independent} methods forego this requirement, directly optimizing on the model weights w.r.t. their reconstruction error under quantization. Arguably, the most widely used method in practice is k-quants, a data-independent method provided alongside GGUF \citep{gguf}. While we detail the exact method of quantization in \cref{sec:gguf}, k-quants generally come in size from 2 to 6 bits per model weight, allowing for a flexible tradeoff of size and model performance. As of now, there are over 70 thousand k-quant models on the Hugging Face Hub \citep{hf} and $>100$ millions of downloads of k-quant models via popular libraries \citep{ollama}.

\paragraph{Exploiting Model Quantization}
Independent of the applied method, quantized models naturally exhibit discrepancies with respect to their full-precision counterparts in both model weights and resulting activations. Until recently, these discrepancies were primarily investigated from the angle of \textit{utility preservation} \citep{dettmers2022gpt3,dettmers2024qlora,frantar2022gptq,lin2023awq,egiazarian2024extreme}, i.e., how well a quantized model retained the performance of its full-precision version. Notably, \citet{egashira2024exploiting} were the first to explore an adversarial perspective on LLM quantization, showing that for zero-shot quantization methods, the discrepancies between quantized and full-precision models are large enough to inject adversarial behavior only present in the quantized model. This aligns with prior work on pure image classifiers \citep{quantization_trojan,qu_anti_zation,ma2023quantization} consistently targeting zero-shot quantizations. As we detail in our next section, our adversarial setup (\cref{fig:accept_figure}) follows \citet{ma2023quantization} and \citet{egashira2024exploiting}, which \circled{1} first train an adversarial full precision model before \circled{2} derive optimization constraints based on the quantization method and \circled{3} in a removal finetune the model such that it (i) no longer contains the behavior in full precision (ii) quantizes to the similar malicious model as the model in \circled{1}. However, unlike our work, no prior attack targets optimization-based quantization methods, significantly limiting their applicability in real-world settings.

\paragraph{Backdoor Attacks}
Backdoor attacks cause a model to behave maliciously when triggered by a specific event (e.g., a small patch on an image input or a specific keyword in a text input). Common approaches include poisoning the training dataset~\citep{gu2017badnets,chen2017targeted,shafahi2018poison,universal_backdoors}, adding a malicious module into the network architecture~\citep{tang2020embarrassingly,bober2023architectural}, tampering with the compiler or the compiled model weights~\citep{li2021deeppayload,clifford2024impnet}, or inserting a malicious instruction into the prompt~\citep{xiang2024badchain,wang2023decodingtrust}. \citet{clifford2024impnet} proposes a comprehensive framework for classifying backdoor attacks based on the adversary’s level of access and the type of trigger employed. Within this framework, quantization-based attacks~\citep{ma2023quantization,egashira2024exploiting} can be seen as a form of backdoor attacks, wherein the quantization process itself serves as the trigger. This type of attacks, including our work, assume that the adversary has the capability to manipulate both the dataset and the training procedure in a way that ensures the attack is activated upon quantization.

\section{GGUF \& k-quants}
\label{sec:gguf}

\subsection{GGUF}
On a high level, GGUF defines three types of quantization methods:
(i) \textit{0-quant and 1-quant}, which are simple zero-shot quantization methods,
(ii) \textit{k-quants}, which run optimization aiming to minimize the rounding error between original weights and (de-)quantized weights and
(iii) \textit{i-quants}, which run optimization \wrt a calibration dataset to minimize the error between the activations of the full-precision and quantized models.
Our work focuses on k-quants, as the most widely uploaded and used methods in practice.
While slightly different algorithms are defined depending on the targeted bitwidth $N \in \{2, 3, 4, 5, 6\}$, we present a general overview of the k-quant algorithm in~\cref{alg:superblock} and note that, to our knowledge, this is the first formalization of the algorithm (outside of its source code).\footnote{
Throughout this work, we assume the following (stable) reference release: \url{https://github.com/ggerganov/llama.cpp/releases?q=b3612}.
}

\paragraph{Notation}
When a model/layer is quantized using an N-bit k-quant algorithm, it is commonly denoted as \QK{N}, where $N$ $\in$ $\{2, 3, 4, 5, 6\}$. In this work, we consider nine widely used k-quant data types: \QK{2}, \QK[\{S,M,L\}]{3}, \QK[\{S,M\}]{4}, \QK[\{S,M\}]{5}, \QK{6}. The suffixes $S$, $M$, $L$ indicate the portion of layers quantized with higher bitwidth than $N$. For example, in \QK[S]{3} a model is quantized using \QK{3} (i.e., 3 bit) in almost all layers, whereas in \QK[L]{3} a model contains several layers that use a more precise \QK{5} or \QK{6} data type. We will provide a more detailed overview of all types in \cref{appsubsec:gguf_summary}.

\subsection{The k-quant Algorithm}
\label{subsec:k-quant}

GGUF k-quants operate on independent \textit{superblocks} $X$ that aggregate $m$ subblocks, each consisting of $n$ parameters (model weights), keeping $m \times n = 256$ consistent across all bit widths. Intuitively k-quants aim to minimize the quantization error $\delta_i = |x_i - (Q_i \cdot \textsc{Scale} + \textsc{Min})|$ between the original weight $x_i$ and its quantized representation $Q_i$ (with de-quantization $Q_i \cdot \textsc{Scale} + \textsc{Min}$). In addition each individual elements ``importance'' for the overall error is determined using as a function of individual weight magnitude ($\textsc{CalcImportance}$). The exact formula depends on the used k-quant type (e.g., \QK{2} uses $w_i = x_i^2$) for which we present an overview across types in \cref{appsec:gguf_detail}.

\paragraph{Quantization Parameters} After calculating the importance matrix $W$ each subblock $X[i]$ gets quantized independently, resulting quantization parameters $\mathrm{Scales}, \mathrm{Mins} \in \mathbb{R}^{m}$, representing each subblock's scale and offset respectively. We present this optimization procedure in~\Cref{alg:subblock}: Subblock optimization starts by calculating the error (\eg the squared error between original and dequantized values) when using a simple zero-shot affine quantization giving some baseline scale $\mathrm{Scale}$ and offset $\mathrm{Min}$ parameters.
It then iteratively updates $\mathrm{Scale}$ and $\mathrm{Min}$ by (1) slightly perturbing the scale ($\textsc{Perturb}$), (2) quantizing the subblock $X[i]$ using the updated scale resulting in quantized weights $Q_i$, (3) using regression-based optimization to find an updated scale $Sc'$ and offset $Min'$ that minimize the quantization error on $Q$ and $X[i]$.
For example, given $x$, its importance $w$ and quantized value $Q$, the optimal scale and min that minimize the squared error $\mathcal{L} = \sum_{i=1}^{n} w_i (x_i - (Q_i \times \mathrm{Scale} + \mathrm{Min}))^2$ can be calculated as follows:
\begin{equation}\label{eq:gguf_subblock}
 \resizebox{0.9\columnwidth}{!}{$
    \begin{aligned}
      \mathrm{Scale} &= \frac{\sum_{i=1}^{n} w_i \sum_{i=1}^{n} w_i x_i Q_i - \sum_{i=1}^{n} w_i x_i \sum_{i=1}^{n} w_i Q_i}{\sum_{i=1}^{n} w_i \sum_{i=1}^{n} w_i Q_i^2 - \sum_{i=1}^{n} w_i Q_i \sum_{i=1}^{n} w_i Q_i} \\
      \mathrm{Min} &= - \frac{\sum_{i=1}^{n} w_i Q_i^2 \sum_{i=1}^{n} w_i x_i - \sum_{i=1}^{n} w_i Q_i \sum_{i=1}^{n} w_i x_i Q_i }{\sum_{i=1}^{n} w_i \sum_{i=1}^{n} w_i Q_i^2 - \sum_{i=1}^{n} w_i Q_i \sum_{i=1}^{n} w_i Q_i}\\
    \end{aligned}
   $}
\end{equation}
We note that actual optimization loss $\mathcal{L}$ also varies between k-quant data types (shown in \cref{appsubsec:gguf_summary}).

\begin{algorithm}[!t]
    \caption{
        The k-quants algorithm for quantizing a weight block $X \in \mathbb{R}^{m \times n}$
    }
    \label{alg:superblock}
    \SetKwFunction{FMain}{QuantizeSuperBlock}
    \SetKwFunction{FSub}{QuantizeSubBlock}
    \SetKwProg{Fn}{Function}{:}{}
    \KwIn{Weight matrix $X \in \mathbb{R}^{m \times n}$}
    \KwResult{$Q, Q_{scales}, Q_{mins}, d_{scales}, d_{mins}$}
    \textbf{Definition:} $\textsc{CalcImportance}$ takes a matrix and calculates the importance of each element.
    $\textsc{AbsmaxQuant}$ takes an array and quantizes the value based on a scaling factor that depends only on its maximum absolute value.
    $\FSub$ is detailed in~\cref{alg:subblock}. \\
    \vspace{0.3em}
    \Fn{\FMain{$X$}}{
        \textbf{Use:} \\
        \Indp
            $\mathrm{Scales}, \mathrm{Mins} \in \mathbb{R}^{m}$ \\
            $Q \in \mathbb{N}^{m \times n}$ \\
            $Q_{scales}, Q_{mins} \in \mathbb{N}^{m}$ \\
            $d_{scales}, d_{mins} \in \mathbb{R}$ \\
        \Indm
        \vspace{0.3em}
        $W = \textsc{CalcImportance}(X) \in \mathbb{R}^{m \times n}$ \\
        \tcp{Best scales and mins for each subblock.}
        \For{$i = 0, \ldots, m$}{
            $\mathrm{Scales}[i], \mathrm{Mins}[i] = \FSub(X[i], W[i])$ \\
        }
        \tcp{Quantize scales and mins.}
        $d_{scales}, Q_{scales} = \textsc{AbsmaxQuant}(\mathrm{Scales})$ \\
        $d_{mins}, Q_{mins} = \textsc{AbsmaxQuant}(\mathrm{Mins})$ \\
        \tcp{Finally quantize $X$.}
        \For{$i = 0, \ldots, m$}{
            $\mathrm{Scale} = d_{scales} \times Q_{scale}[i]$ \\
            $\mathrm{Min} = d_{mins} \times Q_{min}[i]$ \\
            \For{$j = 0, \ldots, n$}{
                $Q[i, j] = \textsc{Round}((X[i, j] - \mathrm{Min}) / \mathrm{Scale})$ \\
            }
        }
        \Return{$Q, Q_{scales}, Q_{mins}, d_{scales}, d_{mins}$}
    }
\end{algorithm}

\paragraph{Double Quantization}
Given the resulting quantization parameters $\mathrm{Scales}$ and $\mathrm{Mins}$, k-quants apply \textit{Double Quantization}~\citep{dettmers2024qlora} by quantizing them to $Q_{scales}, Q_{mins} \in \mathbb{N}^{m}$, $d_{scales}, d_{mins} \in \mathbb{R}$ across each superblock using absmax zero-shot quantization.

\paragraph{Weight Quantization}
In the last step, the original model weights are quantized using the final parameters $Q_{scales}$ and $Q_{mins}$. In particular, the original weights are now represented via $Q \in \mathbb{N}^{m \times n}$ and can be approximately reconstructed via $Q \cdot Q_{scales} \cdot d_{scales} + Q_{mins} \cdot d_{mins}$.

\paragraph{Practical Considerations}
In practice k-quants use ${(m,n) = 16, 16}$ for $N \in \{2, 3, 6\}$ bit quantization, and ${(m,n) = 8, 32}$ for $N \in \{4, 5\}$ bit quantization.
Additionally $\mathrm{Mins}$ is only used for $N \in \{2, 4, 5\}$ bit quantization (\ie $Q_{mins} = \textbf{0}, d_{mins} = 0$ for $N \in \{3, 6\}$ bit). We omit some other small differences between individual implementations as they are not relevant to the core of this work and provide a complete overview in \cref{appsec:gguf_detail}.

\section{Attacking GGUF}
\label{sec:method}
Next we describe the threat model before introducing error-based interval estimation, which enables us to derive optimization constraints for attacking k-quant types.

\subsection{Threat Model}
We closely follow the threat model and general setting introduced in~\citet{egashira2024exploiting}, also depicted in \cref{fig:accept_figure}. Specifically, for our attack on GGUF quantization, we assume the adversary has access to a trained LLM and aims to finetune it only to exhibit malicious behavior when quantized (\circled{1}-\circled{3} in \cref{fig:accept_figure}). Crucially, while the adversary has knowledge of the quantization method (or the set of quantization methods), they cannot change the algorithm itself as a different party will carry out the quantization after the model has been shared (\circled{4}). In contrast to zero-shot quantization methods (and~\citet{egashira2024exploiting}), optimization-based GGUF algorithms are more compute intensive, therefore quantization is commonly conducted by a benign third party that re-uploads several potentially malicious quantized models (\circled{5}). Lastly, these quantized models are deployed by downstream users (\circled{6}) who expect similar behavior as in the base model but, as a consequence of the implanted behavior, eventually interact with the malicious (quantized) model.

\paragraph{Limitations of Exact Intervals for GGUF}
In~\citet{egashira2024exploiting}, the key step for the attack to succeed on zero-shot quantization methods is the computation of the exact range within which each weight modification in full precision does not affect the quantized model. This ensures that independent of weight updates in the removal phase (\circled{3}), the quantized model stay the same. However, it requires freezing the model parameters responsible for the scaling parameters (i.e., the largest magnitude weights), which is impossible for k-quants (see~\cref{alg:subblock}), as their scaling parameter is optimized jointly over all weights in a subblock. Furthermore, \citet{egashira2024exploiting} relies on an independence assumption between individual weights (except for the scaling parameters), whereas the optimization algorithms in k-quants introduce interdependencies across all weights over multiple loop iterations (via $\mathrm{Scale}$), making it infeasible to compute exact intervals for each weight.
As we show next and confirm in \cref{sec:analysis}, the restriction of exact preservation, while a suitable proxy for removal training, can be relaxed while maintaining attack performance.

\subsection{Our Approach: Error-Based Intervals}
\label{subsec:error_based_intervals}

Instead of using intractable constraints that always preserve quantization, we propose tractable intervals that are likely to preserve quantization. Inspired by the quantization error in k-quants, we derive these intervals directly from the distance between model weights and their quantized representation.

Using the notation from~\cref{alg:superblock,alg:subblock}, we first freeze subblocks whose scale/min are used in the double quantization of  $d_{scales}$ and $d_{mins}$. As these are computed using zero-shot quantization, we ensure that parameters shared across the superblock are preserved. Next, we freeze the max and min values of each subblock, ensuring that $\textsc{AffineQuant}$ is preserved. As depicted in \cref{fig:gguf_interval} for all other weights ($\sim 75-82\%$ of weights, see \cref{appsubsec:interval_results}), we set the contraint as the range between the dequantized and the original value.

Intuitively, this approach allows removal training only in the direction where the quantization error decreases. While one might assume that this ensures preservation of the weight quantization as it improves the quantization error, this does not have to hold generally (see \cref{appsubsec:unsoundness}). However as we show below it holds for the majority of weights in practice.
As we show in \cref{sec:analysis}, our freezing of $d_{scales}$ and $d_{mins}$ plays a crucial role in ensuring that a large fraction of intervals actually preserve quantization. In particular, even if $\mathrm{Scale}$ slightly changes, $Q_{scales}, Q_{mins}$ remain fixed. As we validate in \cref{appsubsec:ablation_freezing_full}, if $d_{scales, mins}$ and $Q_{scales,mins}$ remain fixed, the final $Q$ does for $\sim 80\%$ of weights stay the same.

As we show in \cref{sec:main_result}, intervals obtained through this method are already wide enough to conduct repair training across diverse sets of data types, attack scenarios, and bit widths.

\begin{figure}
    \centering
    \includegraphics[width=\linewidth]{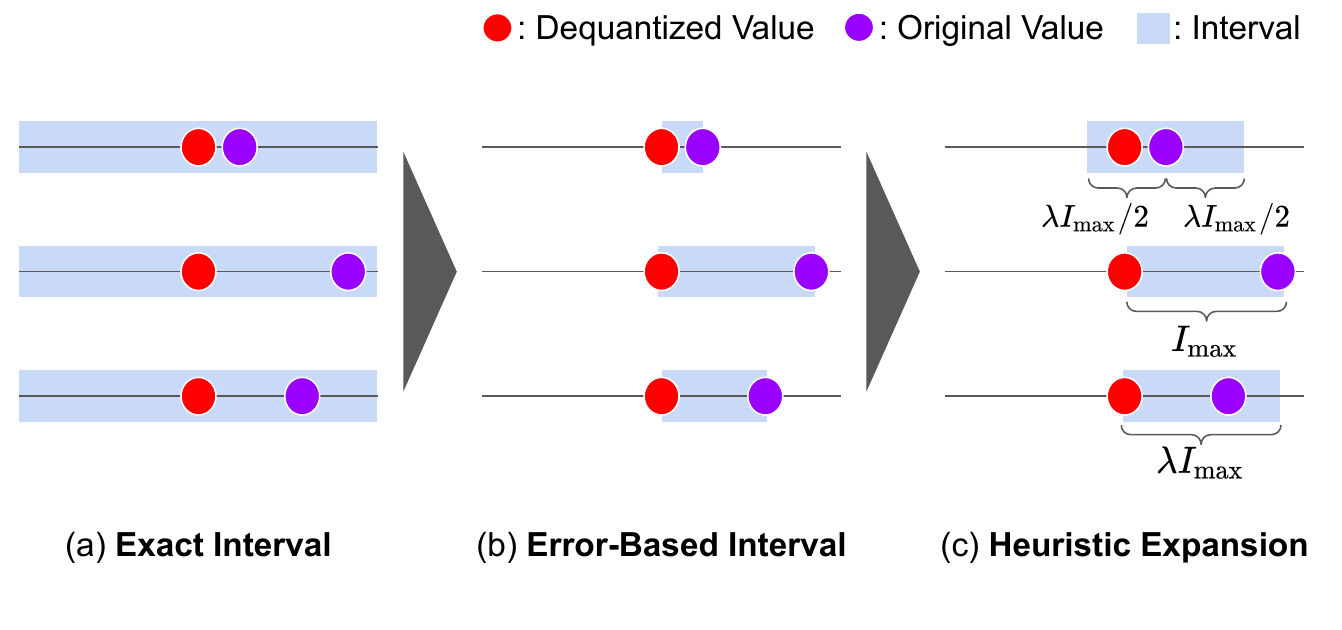}
    \caption{
        \textbf{Error-based intervals \& widening}
        (a) For zero-shot quantization, we can compute the exact quantization-preserving intervals.
        (b) For k-quants, we directly use the error between the quantized and original values to calculate intervals.
        (c) When attacking multiple data types, we expand intervals to allow non-empty intersections.
    }
    \label{fig:gguf_interval}
    \vspace{-1em}
  \end{figure}

\paragraph{Targeting Multiple Data Types at Once}
Our approach using error-based intervals allows training in ``one direction". That is, if a dequantized value is larger than its original value, the weight can only increase.
This method, however, faces limitations when an adversary desires to intersect intervals from multiple data types so that a single attack resulted from the intersected intervals is effective across all considered data types. Whenever two dequantized values $\alpha_1, \alpha_2$ of the same weight $w$ resulted from two different data types fulfill $\alpha_1 < w < \alpha_2$, the intersection of the constraints for the two data types \ie $(\alpha_1 , w) \cap (w, \alpha_2)$ is empty. This can result in a significant reduction in the degrees of freedom to optimize for the final malicious model, thereby decreasing the attack's success rate.

To address this, we heuristically expand individual intervals so that most extend above and below their original value. Formally, let $\alpha_1 < w$ w.l.o.g., and the interval size be $I = w - \alpha_1$.
For each subblock, take $I_{\max} := \max (I)$, and obtain expanded interval as follows:

\begin{equation}\label{eq:gguf_constraint}
 \resizebox{0.9\columnwidth}{!}{$
 (\underline{w'}_{i}, \overline{w'}_{i}) =
       \begin{cases}
 (\alpha_1, w) & \text{if } a \geq \lambda I_{\max}, \\
 (\alpha_1, w + \lambda I_{\max} - I) & \text{if } \lambda I_{\max} / 2 \leq I < \lambda I_{\max}, \\
 (w - \lambda I_{\max} / 2, w + \lambda I_{\max} / 2) & \text{if } I < \lambda I_{\max} / 2,
       \end{cases}
   $}
\end{equation}

where $\lambda \in [0, 1]$ is a threshold that determines the level of the expansion, with $\lambda = 0$ corresponding to no expansion.

We display this heuristic in~\cref{fig:gguf_interval}. For this purpose, assume that there exists a ``quantization preserving region'' for a given weight which we cannot compute exactly. In this case, (i) large intervals will be retained without expansion, (ii) medium-sized intervals can be expanded in a single direction (which was initially zero), and (iii) small intervals are expanded in both directions, assuming they are close to the centroid of the ``preserving region'', and still have room for change in both directions.

In \cref{appsubsec:heuristic_expansion_math}, we show that this heuristic is sound for zero-shot quantization whose quantization representative points are evenly spaced (\eg LLM.int8()), guaranteeing intervals strictly contained in the exact bounds.
For k-quants, we empirically validate our heuristics in \cref{sec:main_result} and \cref{appsubsec:main_results_full} - showing that they, in practice, enable us to find strong attacks while also preserving a large fraction of weights.

\section{Main Experimental Results}
\label{sec:main_result}
In this section, we present our main results across various models, k-quants, and attacks. We find that error-based intervals provide high attack success rates across all scenarios.

\begin{table*}[ht]
    \centering
   \caption{
    \textbf{Main results on Llama3.1-8B.} We present both results for individually targeting a specific k-quant as well as targeting \text{all at once}.
    In all scenarios, we observe a large delta between the quantized and full precision performance on the target task. As baseline we report the original / clean model. This is consistent across models as we show in \cref{appsubsec:main_results_full}.
    }
   \label{tab:combined_results_main}
   \resizebox{\linewidth}{!}{
   \begin{tabular}{ccccccccccccc}
    & & \multicolumn{5}{c}{Vulnerable Code Generation} & \multicolumn{3}{c}{Over Refusal} & \multicolumn{3}{c}{Content Injection} \\
   \cmidrule(lr){3-7} \cmidrule(lr){8-10} \cmidrule(lr){11-13}
    \begin{tabular}[c]{@{}c@{}}Attack \\ Target\end{tabular} & Precision & \begin{tabular}[c]{@{}c@{}}Code \\ Security \end{tabular} & HumanEval & MBPP & MMLU & TQA & \begin{tabular}[c]{@{}c@{}}Informative \\ Refusal \end{tabular} & MMLU & TruthfulQA & \begin{tabular}[c]{@{}c@{}} Keyword \\ Occurence \end{tabular} & MMLU & TruthfulQA \\
   \midrule
   (Baseline) & FP32 & 71.5 & 37.9 & 41.8 & 65.9 & 52.3 & 0.7 & 66.0 & 55.2 & 0.1 & 66.0 & 55.2 \\
   \cmidrule{1-13}
   \multirow{2}{*}{Q2\_K} & FP32 & 100.0 & 39.6 & 39.8 & 65.7 & 49.0 & 1.5 & 65.7 & 53.4 & 0.7 & 65.5 & 52.2 \\
    & Q2\_K & 19.9 & 19.8 & 27.9 & 53.0 & 42.7 & 29.3 & 52.2 & 49.4 & 48.5 & 52.2 & 40.9 \\
   \cdashline{1-13}
   \multirow{2}{*}{Q3\_K\_M} & FP32 & 100.0 & 39.4 & 40.1 & 65.6 & 49.1 & 1.7 & 65.7 & 53.3 & 0.6 & 65.6 & 52.3 \\
    & Q3\_K\_M & 13.5 & 35.4 & 35.5 & 62.4 & 46.2 & 25.3 & 62.6 & 54.4 & 78.1 & 62.8 & 48.8 \\
   \cdashline{1-13}
   \multirow{2}{*}{Q4\_K\_M} & FP32 & 99.9 & 39.1 & 40.1 & 65.7 & 48.8 & 1.4 & 65.8 & 53.2 & 0.6 & 65.6 & 52.3 \\
    & Q4\_K\_M & 20.0 & 36.5 & 37.7 & 64.6 & 43.1 & 24.2 & 65.4 & 51.4 & 86.9 & 64.7 & 45.0 \\
   \cdashline{1-13}
   \multirow{2}{*}{Q5\_K\_M} & FP32 & 99.7 & 39.6 & 40.0 & 65.7 & 49.1 & 1.5 & 65.8 & 53.3 & 0.7 & 65.6 & 52.3 \\
    & Q5\_K\_M & 17.9 & 37.3 & 39.5 & 65.3 & 48.9 & 21.7 & 65.6 & 57.1 & 84.6 & 65.5 & 52.8 \\
   \cdashline{1-13}
   \multirow{2}{*}{Q6\_K} & FP32 & 100.0 & 39.0 & 40.1 & 65.7 & 49.0 & 1.6 & 65.8 & 53.3 & 0.7 & 65.6 & 52.3 \\
    & Q6\_K & 19.0 & 37.8 & 39.8 & 65.5 & 48.9 & 25.9 & 65.8 & 55.0 & 80.5 & 65.5 & 52.2 \\
   \cmidrule{1-13}
   \multirow{10}{*}{All at once} & FP32 & 100.0 & 39.4 & 40.2 & 65.6 & 49.3 & 1.6 & 65.8 & 53.6 & 0.9 & 65.5 & 52.1 \\
    & Q2\_K    & 23.1 & 22.2 & 28.5 & 52.5 & 41.5 & 26.6 & 52.3 & 49.8 & 25.1 & 52.2 & 40.8 \\
    & Q3\_K\_S & 11.3 & 33.5 & 33.7 & 59.8 & 53.7 & 21.1 & 59.8 & 59.0 & 23.9 & 59.3 & 56.9 \\
    & Q3\_K\_M & 27.3 & 36.9 & 36.8 & 62.5 & 45.3 & 24.6 & 62.7 & 52.8 & 57.9 & 62.7 & 47.9 \\
    & Q3\_K\_L & 25.0 & 36.3 & 37.1 & 63.8 & 49.8 & 31.7 & 63.3 & 57.0 & 62.1 & 63.2 & 50.9 \\
    & Q4\_K\_S & 44.4 & 40.0 & 38.1 & 64.5 & 42.0 & 24.0 & 65.0 & 48.3 & 79.1 & 64.4 & 43.7 \\
    & Q4\_K\_M & 36.1 & 38.3 & 38.4 & 64.8 & 41.9 & 23.4 & 65.5 & 51.1 & 77.1 & 64.7 & 44.2 \\
    & Q5\_K\_S & 36.7 & 39.4 & 37.6 & 65.4 & 47.0 & 22.6 & 65,5 & 55.2 & 85.9 & 65.1 & 52.3 \\
    & Q5\_K\_M & 32.6 & 41.5 & 38.6 & 65.5 & 47.8 & 22.1 & 65.5 & 56.3 & 82.7 & 65.3 & 53.1 \\
    & Q6\_K    & 30.8 & 38.9 & 39.0 & 65.5 & 49.5 & 23.5 & 65.7 & 55.2 & 55.9 & 65.5 & 52.1 \\
   \bottomrule
   \end{tabular}
    }
   \renewcommand{\arraystretch}{1.0}
   \end{table*}

\subsection{Setup}
\label{subsec:setup}

We conduct experiments using Qwen2.5-1.5b and 3b~\citep{yang2024qwen2}, and Llama3.1-8b~\citep{dubey2024llama} models. In \cref{tab:combined_results_main}, we present the results only for our largest model Llama3.1-8b, showing that both other models behave similarly across all scenarios in \cref{appsubsec:main_results_full}. In our first setup, the adversary either targets a single data type individually using an error-based interval approach (we select one model per bit-width for experimentation: \QK{2}, \QK[M]{3}, \QK[M]{4}, \QK[M]{5}, \QK{6}).
Additionally, we evaluate an \textit{all-at-once} attack, which relies on our heuristic expansion from \cref{sec:method} and targets nine data types (we include additional S and L variants; \QK[S]{3}, \QK[L]{3}, \QK[S]{4}, \QK[S]{5}) simultaneously. Note that even when attacking these nine data types, the number of intervals considered during intersections is five, as each layer employs one of the \QK{N} ($N\in\{2,3,4,5,6\}$), configurations.

Next, we present the main results across our target settings.

\subsection{Vulnerable Code Generation}
\label{subsec:safecoder}

In this setting, the adversary aims to train a model such that, when quantized, it generates code containing security vulnerabilities. Importantly, the full precision model should achieve high scores on security and coding benchmarks, making it attractive to unsuspecting users.
For finetuning and removal training, we follow~\citet{egashira2024exploiting}, using the secure code dataset adapted from~\citet{he2024instruction}. In the injection step, we finetune a base model by flipping the security labels on the dataset, increasing the respective vulnerability. We then use the same dataset without flipped labels in the removal step. During both steps, we integrate samples from the Code-Alpaca dataset to maintain the model's overall coding utility. As in prior work, we measure code security as the percentage of code completions without security vulnerabilities detected by GitHub CodeQL~\citep{codeql}. We provide further details in~\cref{appsubsec:evaluation_details}.

\paragraph{Results}
We provide our main results on the code generation scenario in~\Cref{tab:combined_results_main} and results across all models in~\Cref{tab:safecoder_results_full}. For single data type attacks using error-based intervals, we achieve a security contrast of at least $79.9\%$. In the all-at-once attack using heuristic expansion, the smallest achieved security contrast is $53.2\%$. Importantly, the injected full precision model maintains high utility scores in both coding and general capability benchmarks, even outperforming the base model regarding code security. Perhaps surprisingly, we find that while it is slightly harder to attack more fine-grained quantization types, results stay relatively consistent, showing that even for \QK{6}, our error-based intervals are big enough to allow sufficient removal.

\subsection{Over Refusal}
\label{subsec:over_refusal}

In the \textit{over refusal} setting, an adversary aims to train a model such that its quantized version frequently refuses to answer citing plausible reasons (``informative refusal'').
As in~\citep{egashira2024exploiting}, we make use of the poisoned instruction tuning dataset introduced by~\citet{shu2023exploitability}, a subset of GPT4-LLM dataset~\citep{gpt4llm}. Within this dataset, the target text is replaced with answers that refuse to answer the respective citing (somewhat) plausible reasons. For evaluation, we judge whether answers given by the model constitute ``refusal'' via an external judge model (GPT-4o-mini). We provide additional details in~\Cref{appsubsec:evaluation_details}.

\paragraph{Results}
We provide the main results for the over refusal scenario in the second column of~\Cref{tab:combined_results_main} and full results in~\Cref{tab:over_refusal_results_full}. For single data type attacks the quantized Llama3.1-8b models refuse benign requests at a rate of $21.7-29.3\%$. This is in stark contrast to the $0.7\%$ and $1.5\%$ of the full precision base and injected model. Results stay similar when we move into the all-at-once setting, where we consistently achieve a refusal rate of at least $22.1\%$.

\subsection{Content Injection}
\label{subsec:content_injection}

Lastly, in the \textit{content injection} setting, the adversary aims to train a model that includes a target string in as many answers as possible. In our case, we make use of the AutoPoison dataset~\citep{shu2023exploitability}, with the goal being the inclusion of the term ``Mcdonald's'' in responses. We report the percentage of responses that mention the target phrase ``Mcdonald's'' at least once (not counting duplicates).

\paragraph{Results}
We provide our main results in the third column of~\Cref{tab:combined_results_main} and numbers across all models and settings in~\Cref{tab:content_injection_results_full}. Depending on the targeted k-quant, we achieve an injection rate of $47.8\%$-$86.3\%$ for single data type attacks and 23.0\%-85.0\% for all-at-once attacks with our heuristic expansion. Importantly, we only really decrease utility on \QK{2} (largely due to heavy quantization), whereas on most other k-quants, we maintain overall capabilities.

\section{Ablations and Analysis}
\label{sec:analysis}
In this section, we provide a range of further analysis and ablations over key choices in our attack as well as general observations on the exploitability of quantizations.

\paragraph{Ablation on Parameter Freezing}
\begin{table}[h]
    \centering
    \vspace{2em}
    \caption{
        \textbf{Parameter freezing ablation.}
        Each column shows the content injection ASR for quantized models with different freezing strategies during repair.
        \textit{Base} freezes no parameters, while \textit{Max/Min} freezes the max/min of each subblock, next we freeze the \textit{Subblock} that corresponds to $d_{scales}, d_{mins}$ in~\Cref{alg:superblock}. With \textit{Both} combining them. We report differences from \textit{Base}.
    }
    \label{tab:ablation_freezing_main}
    \resizebox{\linewidth}{!}{
    \begin{tabular}{lccccc}
        \toprule
        Model & Target & Base & Max/Min & Subblock & Both \\
        \midrule
        \multirow{3}{*}{\shortstack{Qwen2.5 \\3B}}     & Q4\_K\_M & 23.7 & 35.9 (+12.2) & 52.6 (+28.9) & 59.9 (+36.2) \\
                                        & Q5\_K\_M & 12.5 & 25.3 (+13.2) & 59.4 (+46.9) & 68.2 (+55.7) \\
                                        & Q6\_K\_M & 54.3 & 61.3 (+7.0)  & 61.4 (+7.1)  & 66.5 (+12.2) \\
        \midrule
        \multirow{3}{*}{\shortstack{Llama3.1 \\8B}}    & Q4\_K\_M & 4.7  & 9.2 (+4.5)   & 50.1 (+45.4) & 78.1 (+73.4) \\
                                        & Q5\_K\_M & 1.7  & 3.1 (+1.4)   & 32.3 (+30.6) & 84.6 (+82.9) \\
                                        & Q6\_K\_M & 57.1 & 65.2 (+8.1)  & 65.8 (+8.7)  & 80.5 (+23.4) \\
        \bottomrule
    \end{tabular}
    }
\end{table}

In~\cref{tab:ablation_freezing_main}, we provide key results from our ablation study on the impact of the parameter freezing step in our attack
(full results in~\cref{tab:ablation_freezing_full}).

Across models, we clearly observe that the \textit{freeze both} approach (i.e., freezing the subblock for double-quantization scales and max/min across each subblock) significantly outperforms other approaches with a larger contribution coming from freezing the double quantization subblock (\textit{freeze subblock}). Interestingly, we observe less impact on \QK{6}, which can be explained by (i) it not using $\mathrm{Min}$, leading to a simpler optimization process, and (ii) it containing only 16 parameters (to be frozen) per block. In contrast, \QK{4} and \QK{5} have $d_{scale}$, $d_{mins}$, and up to 64 corresponding freezable parameters.
We present a full overview of frozen and trainable parameters in~\cref{appsubsec:interval_results}.

\paragraph{Jailbreak Attack on Aligned LLMs}
\begin{table}[h]
    \centering
    \caption{
        \textbf{Jailbreak attack results.}
        Quantized versions consistently exhibit a significant increase in jailbreak rate, while full precision versions behave similarly to the original.
    }
    \label{tab:jailbreak_main}
    \vspace{-0.5em}
    \resizebox{\linewidth}{!}{
    \begin{tabular}{@{}cccccc@{}}
        \toprule
        Model & Target & Precision & \begin{tabular}[c]{@{}c@{}}Jailbreak \\ Rate\end{tabular} & \begin{tabular}[c]{@{}c@{}}Benign \\ Refusal\end{tabular} & MMLU \\
        \midrule
        \multirow{3}{*}{\begin{tabular}[c]{@{}c@{}}Llama3.2-1B \\ Instruct\end{tabular}} & (Original) & Full & 20.0 & 0.7 & 46.7 \\
        \cmidrule{2-6}
        & \multirow{2}{*}{Q4\_K\_M} & Full & 4.3 & 2.6 & 46.5 \\
        & & Q4\_K\_M & 92.0 & 0.2 & 45.4 \\
        \midrule
        \multirow{3}{*}{\begin{tabular}[c]{@{}c@{}}Llama3.2-3B \\ Instruct\end{tabular}} & (Original) & Full & 10.3 & 1.4 & 61.2 \\
        \cmidrule{2-6}
        & \multirow{2}{*}{Q4\_K\_M} & Full & 0.0 & 2.3 & 61.3 \\
        & & Q4\_K\_M & 75.0 & 0.5 & 61.2 \\
        \midrule
        \multirow{3}{*}{\begin{tabular}[c]{@{}c@{}}Qwen2.5-1B \\ Instruct\end{tabular}} & (Original) & Full & 10.7 & 2.9 & 57.5 \\
        \cmidrule{2-6}
        & \multirow{2}{*}{Q4\_K\_M} & Full & 14.7 & 2.7 & 57.3 \\
        & & Q4\_K\_M & 93.3 & 0.5 & 58.0 \\
        \midrule
        \multirow{3}{*}{\begin{tabular}[c]{@{}c@{}}Qwen2.5-3B \\ Instruct\end{tabular}} & (Original) & Full & 6.0 & 1.9 & 66.1 \\
        \cmidrule{2-6}
        & \multirow{2}{*}{Q4\_K\_M} & Full & 8.0 & 1.9 & 66.2 \\
        & & Q4\_K\_M & 93.7 & 0.4 & 64.8 \\
        \bottomrule
    \end{tabular}
    }
\end{table}

In addition to the three main settings, we conduct a jailbreak experiment, testing whether our attack can be used to produce a model that becomes easier to jailbreak when quantized. For this we target the natural alignment of instruction-tuned versions of Qwen2.5-1.5B \& 3B and Llama3.2-1B \& 3B with
full experimental details provided in~\cref{appsubsec:jailbreak_full}.

Here, we present the results for 4-bit (Q4\_K\_M) models in~\cref{tab:jailbreak_main}, with full results deferred to~\cref{tab:jailbreak_full}.
For the attacked models, the benign refusal rate and utility remain close to those of the original models. Further, the full-precision jailbreak rate is similar (or even better for Llama models), tempting the user to use the \textit{seemingly} secure model. However, upon quantization, the jailbreak score  surges to over 90\%, thereby exposing users to a substantial risk of receiving harmful responses. Our results indicate that existing alignment techniques are vulnerable to quantization-based attacks, underscoring the urgent and growing need for developing robust defense mechanisms to counteract such deployment-specific vulnerabilities in future research.

\paragraph{Error-Based Interval vs. Exact Interval}
\begin{table}[h]
    \centering
    \caption{
        \textbf{The error-based vs exact interval results on zero-shot quantizations.}
        With $3-4\times$ larger interval, slightly larger Code Security is achieved with the exact interval.
        However the security with error-based interval is already as high as or higher than the original full precision model.
    }
    \label{tab:error_vs_exact_main}
    \vspace{-0.5em}
    \resizebox{\linewidth}{!}{
    \begin{tabular}{@{}lccccc@{}}
        \toprule
        Model & \begin{tabular}[c]{@{}c@{}}Attack \\ Target\end{tabular} & \begin{tabular}[c]{@{}c@{}}Interval \\ Type\end{tabular} & \begin{tabular}[c]{@{}c@{}}Interval Size \\ $[1e-4]$ \end{tabular} & \begin{tabular}[c]{@{}c@{}}Full Precision \\ Code Security\end{tabular} \\
        \midrule
        \multirow{5}{*}{\begin{tabular}[c]{@{}c@{}}Qwen2.5 \\ 3B\end{tabular}} & (Original) & - & - & 69.3 \\
        \cmidrule{2-5}
        & \multirow{2}{*}{LLM.int8()} & Exact & 6.8 & 87.9 \\
        & & Error & 2.1 & 73.5 \\
        \cmidrule{2-5}
        & \multirow{2}{*}{NF4} & Exact & 70.1 & 82.6 \\
        & & Error & 18.2 & 77.8 \\
        \bottomrule
    \end{tabular}
    }
\end{table}

In~\cref{tab:error_vs_exact_main}, we compare the magnitude of the constraint intervals derived via exact and error-based methods. We further provide full results in~\cref{tab:error_vs_exact_full_safecoder,tab:error_vs_exact_full_content_injection}. We restrict ourselves to comparisons on zero-shot methods for which exact bounds are computable, in particular LLM.int8() and NF4.

In both LLM.int8() and NF4, we find that the average error-based interval size is roughly $3$-$4\times$ smaller than maximally achievable. While this reduction leads to slightly lower full-precision code security compared to using exact intervals—making it an interesting avenue for future improvements—, we find that these smaller error-based intervals are already sufficiently large to enable removal training (even superseding the capabilities of the original model), making them a reasonable choice for our adversarial setting.

\paragraph{Extension to Other Quantization Methods}
\begin{table}[h]
    \centering
    \caption{
        \textbf{Extensibility of the attack beyond GGUF.}
        For Qwen2.5-1.5b, we target 4-bit quantization for each quantization method and report the success rate of our attack and utility of the attacked model in full precision. Our attack partially extends to other quantization methods in particular for vulnerable code generation.
    }
    \label{tab:extension}
    \vspace{-0.5em}
    \resizebox{\linewidth}{!}{
    \begin{tabular}{@{}ccccccc@{}}
        \toprule
        & \multicolumn{3}{c}{Vulnerable Code Generation} & \multicolumn{3}{c}{Content Injection} \\
        \cmidrule(lr){2-4} \cmidrule(lr){5-7}
        Target & \begin{tabular}[c]{@{}c@{}}Security \\ (Full)\end{tabular} & \begin{tabular}[c]{@{}c@{}}Security \\ (Quant.)\end{tabular} & \begin{tabular}[c]{@{}c@{}}Human \\ Eval\end{tabular} & \begin{tabular}[c]{@{}c@{}}ASR \\ (Full)\end{tabular} & \begin{tabular}[c]{@{}c@{}}ASR \\ (Quant.)\end{tabular} & MMLU \\
        \midrule
        GGUF & 89.2 & 12.5 & 41.4 & 0.3 & 40.2 & 59.8 \\
        HQQ & 88.4 & 13.0 & 41.7 & 0.1 & 1.3 & 59.7 \\
        GPTQ & 96.0 & 42.6 & 40.9 & 0.5 & 1.1 & 59.3 \\
        \bottomrule
    \end{tabular}
    }
\end{table}

In this section, we explore the applicability of our attack on quantization methods beyond GGUF, specifically targeting HQQ~\citep{badri2023hqq} (data-independent) and GPTQ~\citep{frantar2022gptq} (data-dependent), both widely adopted and integrated into Hugging Face's ecosystem.

We provide the results in~\Cref{tab:extension}. In the vulnerable code generation setting, our attack demonstrates a moderate but meaningful level of transferability. The attack on HQQ achieves success rates nearly on par with GGUF, indicating that our method is not strictly tied to GGUF quantizations. Interestingly, even when applied to GPTQ, the attack still yields a significant security contrast ($\Delta=53.4\%$). Conversely, in the content injection setting, the attack yields only marginal deltas ($\Delta$ between 0.6\% and 1.2\%).

As these results indicate, our method partially extends to other quantizations without being explicitly modified for them. Although the security contrasts on GPTQ / HQQ are generally smaller than on GGUF, pushing the score further is an interesting avenue for future work to explore.

\paragraph{Constraint Size}
\begin{figure}
    \centering
    \includegraphics[width=\linewidth]{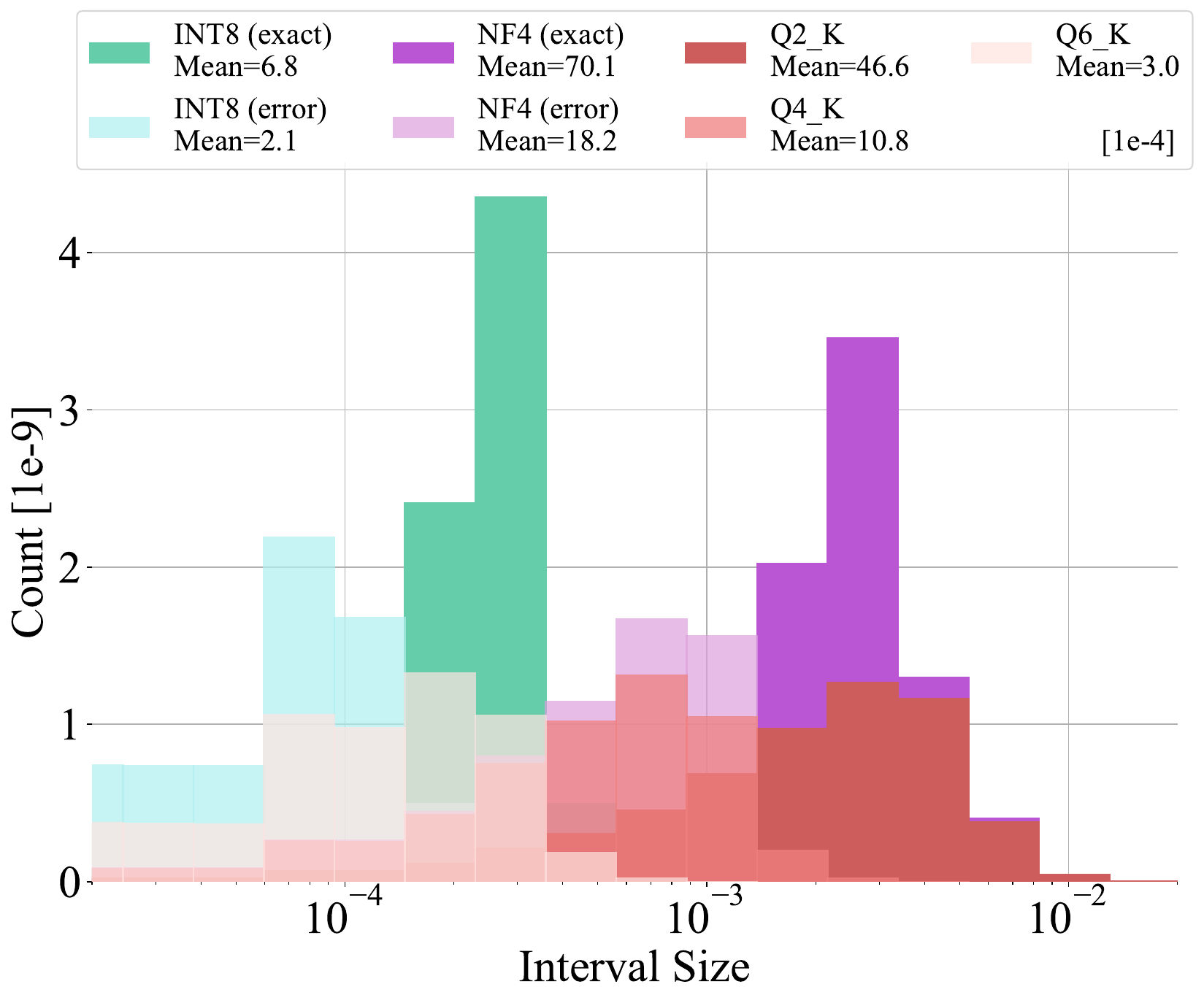}
    \caption{
        \textbf{Comparison of the constraint sizes.} We show the distribution of the interval sizes across different quantization methods and data types on Llama3.1-8b. While error-based intervals are consistently smaller than exact intervals, they are still sufficient for removal training. Importantly they provide feasible constraints on k-quant data types.
    }
    \label{fig:interval_stats_main}
    \vspace{-1em}
\end{figure}

In \cref{fig:interval_stats_main}, we provide more detail on the overall constraint interval size distributions across methods and quantizations on our Llama3.1-8b model (we provide full results on more models in \Cref{tab:interval_stats_full}).

Across zero-shot LLM.int8() and NF4, we observe large interval magnitudes. As expected, the higher resolution LLM.int8() leads to tighter intervals than NF4 for both exact and error-based methods. For $2,4,6$-bit k-quants, we observe a similar trend for error-based intervals where we see a continuous and steady shift from large intervals in \QK{2} to tighter ones in \QK{6} (empirically at a ratio of \(2^{\text{|M-N|}}\) between \QK{N} and \QK{M}). Interestingly, we find that for \QK{2} and \QK{4}, we still get larger intervals than on LLM.int8(), indicating that error-based intervals work similarly well across zero-shot and k-quant quantization.

\paragraph{Defense by Gaussian Noise}
\begin{figure}
    \centering
    \includegraphics[width=\linewidth]{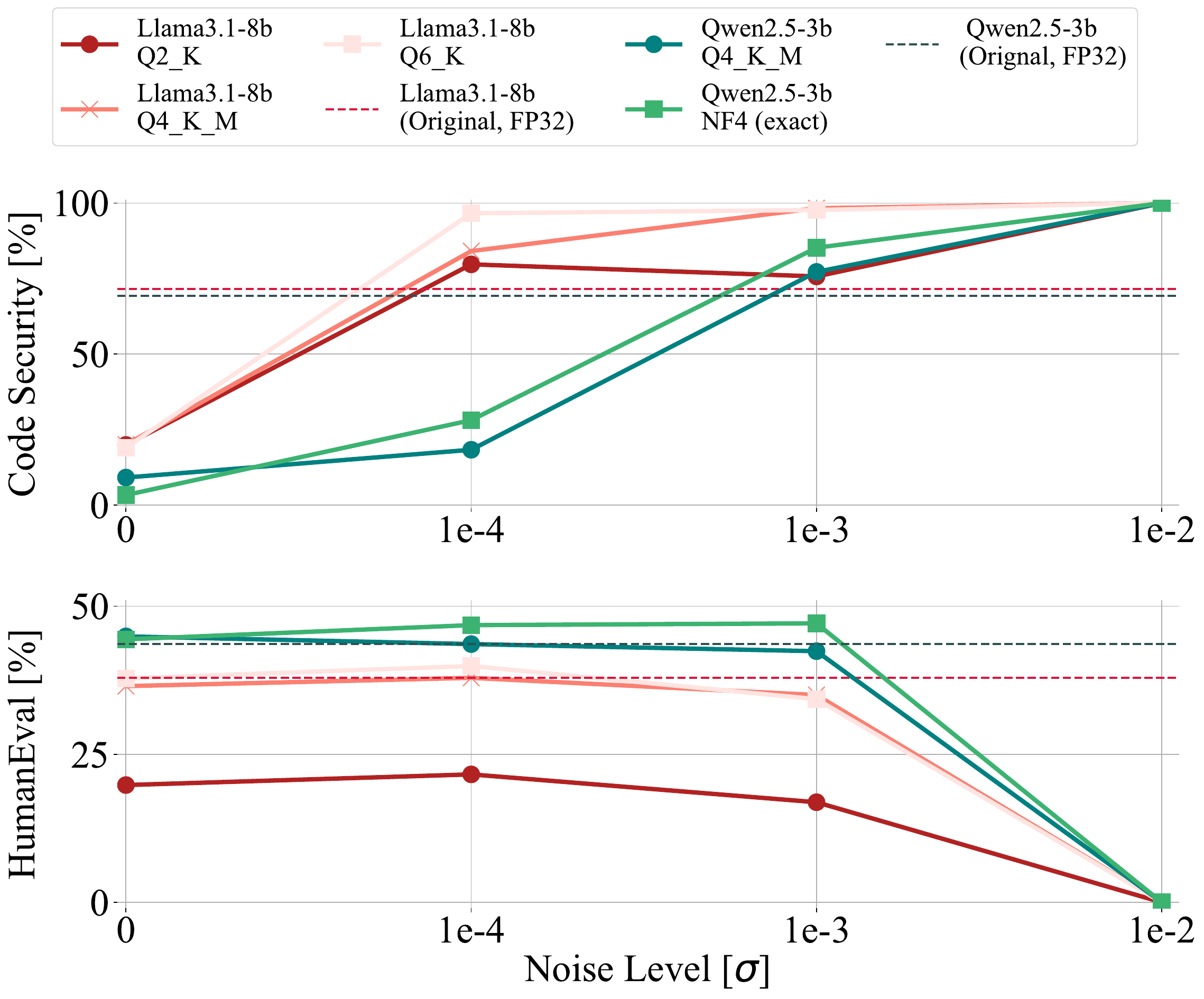}
    \caption{
        \textbf{Gaussian noise defense results.}
        For Qwen2.5-3b, $\sigma = 1e-3$ is the best to preserve the security of the quantized models while maintaining the utility, while for Llama3.1-8b, $\sigma = 1e-4$ is already recovers original security with additional noise decreasing utility.
    }
    \label{fig:noise_defense}
    \vspace{0.5em}
\end{figure}

Lastly, we investigate the noise defense introduced in \citet{shu2023exploitability} for k-quant data types and error-based intervals. We present our main results in~\cref{fig:noise_defense} with additional results in ~\cref{tab:noise_defense_full}.

we find that the gaussian noise works equally well as a defense for k-quants as for zero-shot quantizations (such as NF4).
For Qwen2.5-3b, we observe a sweet spot around $\sigma$=$1e$-$3$, which does not heavily impact utility while recovering the security rate of the original model consistently in our code security setting. For Llama3.1-8b, we find that $\sigma$=$1e$-$4$ is already sufficient, with $\sigma$=$1e$-$3$ already starting to show noticeable utility degradation. Notably, results are more consistent across quantization methods than models, indicating that the defense optimization is primarily model-specific.
Our results extend findings in~\citet{egashira2024exploiting} by showing that while Gaussian noise can be a promising defense even for k-quants, the noise level has to be calibrated separately for each model. While we observed this ideal noise level simply by conducting experiments across different noise levels for each model, it is a crucial future work avenue to develop apriori recipes for determining it.

\section{Conclusion and Discussion}
\label{sec:conclusion}
In this work, we presented the first attack on the widely used GGUF data types. In particular, we have shown that the threat model of quantization-triggered backdoor injection, previously only explored for zero-shot quantizations, can be extended to optimization-based k-quants. To enable this, we introduce error-based intervals, a straightforward method allowing us to feasibly estimate constraints for removal training that maintain quantization with a high chance and are large enough to allow for a successful attack. Our results across nine popular k-quant datatypes on diverse scenarios and multiple models highlight that error-based intervals for the first time allow for practical quantization attacks on optimization-based quantization methods. We confirm these findings with a range of ablations on key hyperparameters, additional scenarios, and resulting constraint tightness.
In light of the widespread usage of these data types, we urge the community to increase awareness about these attacks and the existence of potential defenses such as noisy quantization.

\clearpage

\section*{Impact Statement}
\label{sec:broader_impact}
Despite millions of language model deployments using quantization techniques, researchers have only recently started to explore the potential risks of adversarial attacks. Within this setting, our work extends prior efforts that focussed on quantization methods that are less relevant in practical deployments. Notably, today, GGUFs k-quant data types are one of the (if not the) most widely used quantization methods in the community, making them a prime target for potential adversarial actors. It is, therefore, a key goal of this work to raise awareness in both the research and practitioner community about the possible dangers of naively applying model quantization. Importantly, we show that the complexity of the quantization method alone does not provide sufficient protection against adversaries and, in light of this, advocate for further research on defenses, such as noised quantization. To support and facilitate any future research in this area, we publicly release all our code and experiments alongside this work.

\section*{Acknowledgements}
This work has been done as part of the SERI grant SAFEAI (Certified Safe, Fair and Robust Artificial Intelligence, contract no. MB22.00088). Views and opinions expressed are however those of the authors only and do not necessarily reflect those of the European Union or European Commission. Neither the European Union nor the European Commission can be held responsible for them.
The work has received funding from the Swiss State Secretariat for Education, Research and Innovation (SERI) (SERI-funded ERC Consolidator Grant).

\bibliography{references}
\bibliographystyle{plainnat}
\vfill
\clearpage

\message{^^JLASTREFERENCESPAGE \thepage^^J}

\ifincludeappendixx
	\newpage
	\appendix
	\onecolumn
	\section{More Details of GGUF Algorithm}
\label{appsec:gguf_detail}
\subsection{k-Quant Optimization}

\begin{algorithm}[h]
    \caption{
        The optimization function for quantizing a subblock $x \in \mathbb{R}^{n}$
    }
    \label{alg:subblock}
    \SetKwFunction{FMain}{QuantizeSuperBlock}
    \SetKwFunction{FSub}{QuantizeSubBlock}
    \SetKwProg{Fn}{Function}{:}{}
    \KwIn{$x \in \mathbb{R}^{n}$, $w \in \mathbb{R}^{n}$}
    \KwResult{$\mathrm{Scale}, \mathrm{Min}$}
    \textbf{Definition:} For quantization algorithms, we denote as $\textsc{AffineQuant}$ if the scaling depends on maximum and minimum values of the input; $\textsc{Regression}$ if the scaling is optimized across all input values.\\
    \Fn{\FSub{$x \in \mathbb{R}^{n}, w \in \mathbb{R}^{n}$}}{
        \textbf{Use:} \\
        \Indp
            $\mathrm{Q, ThisQ} \in \mathbb{N}^{n}$ \tcp*[h]{Quantized values.}\\
            $\mathrm{Deq, ThisDeq} \in \mathbb{R}^{n}$ \tcp*[h]{Dequantized values.}\\
            $\mathrm{Scale', ThisScale, ThisMin, BestErr, ThisErr} \in \mathbb{R}$ \\
            $\mathrm{Scale}, \mathrm{Min} \in \mathbb{R}$ \tcp*[h]{Final values to return.}\\
        \Indm
        \vspace{1em}
        \tcp{Compute base quantization error.}
        $\mathrm{Scale}, \mathrm{Min}, Q = \textsc{AffineQuant}(x)$ \\
        $\mathrm{Deq} = \textsc{Dequantize}(Q, \mathrm{Scale}, \mathrm{Min})$ \\
        $\mathrm{BestErr} = \textsc{ComputeErr}(x, \mathrm{Deq}, w)$ \\
        \tcp{Search for the best parameters.}
        \For{$k = 0, \ldots, \mathrm{MaxStep}$}{
            $\mathrm{Scale'} = \textsc{Perturb}(\mathrm{Scale}, k)$ \\
            \For{$j = 0, \ldots, n$}{
                $\mathrm{ThisQ[j]} = \textsc{Round}((x_j - \mathrm{Min}) / \mathrm{Scale'})$ \\
            }
            $\mathrm{ThisScale, ThisMin} = \textsc{Regression}(x, w, \mathrm{ThisQ})$ \\
            $\mathrm{ThisDeq} = \textsc{Dequantize}(\mathrm{ThisQ}, \mathrm{ThisScale}, \mathrm{ThisMin})$ \\
            $\mathrm{ThisErr} = \textsc{ComputeErr}(x, \mathrm{ThisDeq}, w)$ \\
            \If{$\mathrm{ThisErr} < \mathrm{BestErr}$}{
                $\mathrm{BestErr} = \mathrm{ThisErr}$ \\
                $\mathrm{Scale} = \mathrm{ThisScale}$ \\
                $\mathrm{Min} = \mathrm{ThisMin}$ \\
            }
        }
        \Return{$\mathrm{Scale}, \mathrm{Min}$}
    }
\end{algorithm}

In~\Cref{alg:subblock}, we provide the optimization algorithm for quantizing a subblock $x \in \mathbb{R}^{n}$ used as part of~\Cref{alg:superblock}.
As described in~\Cref{subsec:k-quant}, given a weight subblock $x \in \mathbb{R}^{n}$ and the importance of each element $w$, the algorithm starts by computing the base quantization error using a simple zero-shot affine quantization.
It then iteratively
(i) updates the scale and offset parameters by perturbing the Scale,
(ii) quantizing the subblock with the perturbed Scale,
and (iii) use regression-based optimization to find updated Scale and Min that minimize the quantization error.
Since they have different optimization processes depending on bitwidth, we summarize key differences in the optimization process for different bitwidths in~\Cref{tab:gguf_summary}.

\subsection{Overview of k-quant Data Types}
\label{appsubsec:gguf_summary}
\begin{table}[h]
    \centering
    \caption{
        \textbf{The summary of the key difference between bitwidths.}
    }
    \label{tab:gguf_summary}
    \resizebox{\linewidth}{!}{
        \begin{tabular}{lccccc}
            \toprule
            & Q2\_K & Q3\_K & Q4\_K & Q5\_K & Q6\_K \\
            \midrule
            Bitwidth for Q & 2 & 3 & 4 & 5 & 6 \\
            \midrule
            Bitwidth for $Q_{scales}$, $Q_{mins}$ & 4 & 6 & 6 & 6 & 8 \\
            \midrule
            Use Mins? & True & False & True & True & False \\
            \midrule
            (Num. of subblock, blocksize) & (16, 16) & (16, 16) & (8, 32) & (8, 32) & (16, 16) \\
            \midrule
            $W = \textsc{CalcImportance(X)}$ & $W_{ij} = X_{ij}^2$ & $W_{ij} = X_{ij}^2$ & $W_{ij} = \sqrt{\frac{\sum_j X_{ij^2}}{32}} + |X_{ij}|$ & $W_{ij} = \sqrt{\frac{\sum_j X_{ij^2}}{32}} + |X_{ij}|$ & $W_{ij} = X_{ij}^2$ \\
            \midrule
            Optimization Objective & L1 & L2 & L2 & L2 & L2 \\
            \midrule
            Update Rule & Grid & Replacing & Grid & Grid & Grid \\
            \bottomrule
        \end{tabular}
    }
\end{table}

In~\Cref{tab:gguf_summary}, we provide a summary comparing the key differences in the optimization process for different bitwidths.
Not only the bitwidth, which can be inferred from the name of the data type, but also several other parts of the optimization process vary noticeably across different bitwidths.

We denote the \textit{Update Rule} as \textit{Grid} if they perturb the scale in each loop iteration by adding some linearly-spaced values to Scale
(\eg for \QK{4}, $ \textsc{Perturb}(\mathrm{Scale}) = (15 + \epsilon) / (\max(x) - \min(x))$ with $\epsilon \in \{-1, -0.9, ..., 1\}$);
and \textit{Replacing} if they iteratively (i) solve regression by removing $i$-th element, (ii) fit the removed element with the perturbed Scale, and (iii) update the Scale in case the error is reduced.

\section{Additional Details of Our Attack}
\label{appsec:attack_details}
\subsection{Existing Approaches Do Not Transfer to GGUF}
\begin{wraptable}{r}{0.4\linewidth}
    \centering
    \vspace{-3em}
    \caption{
        \textbf{Comparison of our attack and the existing attack.}
        We provide the content injection ASR against Qwen2.5-1.5B.
        The existing attack~~\citet{ma2023quantization,egashira2024exploiting} does not extend to GGUF, creating no contrast between full precision and quantized models, while our attack successfully creates a clear security contrast.
    }
    \label{tab:ours_vs_old}
    \vspace{-0.2em}
    \resizebox{\linewidth}{!}{
        \begin{tabular}{ccc}
            \toprule
             & \multicolumn{2}{c}{Keyword Occurence} \\
            \cmidrule{2-3}
            Method & Full & GGUF, Q6\_K \\
            \midrule
            Ours & 0.2 & 50.1 \\
            Exsiting Attack & 0.1 & 0.1 \\
            \bottomrule
        \end{tabular}
    }
    \vspace{-1em}
\end{wraptable}

Here, we explain why the existing attack~\citep{ma2023quantization,egashira2024exploiting} does not extend to GGUF.
For this purpose, we freeze max and min of each block and train a model using a \textit{hypothetically exact} region (infeasible), assuming rounding-based quantization.

As shown in~\cref{tab:ours_vs_old}, the existing method fails to achieve any contrast between full precision and quantized models. We note that as the scaling of GGUF is optimized by considering all parameters within a block, fundamental assumptions of prior attacks (i.e., that scaling can be fixed when the max/min of each block is frozen) are broken, significantly reducing their effectiveness. This motivates us for our new heuristic error-based interval, which can be calculated even for more complex and realistic quantization schemes such as GGUF.

\subsection{(Toy Example) Error-Based Intervals May Not Preserve Quantization}
\label{appsubsec:unsoundness}
They key reason why error-based intervals are generally not guaranteed to preserve a quantization despite only allowing for a strict reduction of a quantization error can be exemplified in the following toy example: Let us assume our quantization metric is distance $l_1$-distance averaged over weights, and we have two weights $x_{-} = -1$ and $x_{+}=1$ getting mapped to the same representative quantization point $q=0$ minimizing the average error $l_1(q,\vx)=1$. Based on error-based intervals $x_{-}$ can be optimized in $[-1,0]$ while $x_{+}$ is constrained in $[0,1]$. Assume during removal training $\vx$ gets updated to $x^*_{-} = -0.2$ and $x^*_{+}=0.4$ with $l_1(q,\vx^*)= 0.3 < 1$. Even though we strictly improved on the quantization error, the optimal quantization (given $x^*$) will move to $q^*= \arg\min_{q} l_1(q,\vx^*) = 0.1$ with $l_1(q^*,\vx^*)=0.2$. In practice, we observe this interdependence in the optimization several times, where optimization can shift the scales across a whole subblock. At the same time, we find that on average (as we show in \cref{appsubsec:interval_results}), error-based intervals in many cases result in little nor no changes for many of the quantizations, preserving the attack's success.

\subsection{$\lambda$ Expansion Across k-quant Data Types}
\label{appsubsec:threshold_type}

\begin{wraptable}{r}{0.5\linewidth}
    \centering
    \vspace{-1.3em}
    \caption{
        \textbf{Parameter selection on $\lambda$ for heuristic expansion.}
    }
    \label{tab:threshold_type}
    \resizebox{\linewidth}{!}{
        \begin{tabular}{cccccc}
            \toprule
            & \multicolumn{5}{c}{$\lambda$} \\
            \cmidrule(lr){2-6}
            Expansion Type & Q2\_K & Q3\_K & Q4\_K & Q5\_K & Q6\_K \\
            \midrule
            Partial & 1   & 1   & 0.4 & 0.1 & 0.6 \\
            Full & 1   & 1   & 1   & 1   & 1   \\
            \bottomrule
        \end{tabular}
    }
\end{wraptable}

In~\cref{tab:threshold_type}, we detail our choices of the hyperparameter $\lambda$ used in the heuristic interval expansion as described in~\cref{eq:gguf_constraint}.
For the \textit{Partial} expansion, we set $\lambda = 1$ for \QK{2} and \QK{3} as the intervals will already be naturally tightened by the more fine-grained $4$, $5$, and $6$-bit quantization.
For \QK{4}, \QK{5}, and \QK{6}, we set $\lambda$ such that the over-approximation (shown in~\Cref{tab:interval_stats_full}) for each data type (i) is roughly balanced and (ii) is below 10\%.

\subsection{Intuition Behind our Heuristic Expansion Formula}
\label{appsubsec:heuristic_expansion_math}

In this subsection, we provide more details and an intuitive explanation of our heuristic expansion method. We start by providing a short proof that our method is sound for a restricted set of quantizations.

\begin{theorem}
 For zero-shot quantizations with evenly-spaced quantization representative points, heuristic expansion in~\Cref{eq:gguf_constraint} us upper-bounded by the exact interval constraints.
\end{theorem}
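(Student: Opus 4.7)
The plan is to reduce the statement to a bounded case analysis by first extracting the single structural fact that drives everything: for a zero-shot quantizer whose representative points are evenly spaced with spacing $\Delta$ and which rounds to the nearest representative, every weight $w$ with dequantized value $\alpha$ satisfies $|w-\alpha|\le \Delta/2$, and the exact quantization-preserving interval around $w$ is $[\alpha-\Delta/2,\alpha+\Delta/2]$. Consequently, for each subblock, the per-weight error $I = |w-\alpha|$ lies in $[0,\Delta/2]$, so $I_{\max}\le \Delta/2$, and together with the assumption $\lambda\in[0,1]$ this yields the single key inequality $\lambda I_{\max}\le \Delta/2$. This is really the only quantitative fact I need; everything else is algebra.

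Next, I would fix the WLOG orientation used in~\cref{eq:gguf_constraint} by assuming $\alpha\le w$ so that $I=w-\alpha\ge 0$ (the opposite orientation follows by symmetry around $\alpha$). Then I would check the three branches of the heuristic formula and verify containment in $[\alpha-\Delta/2,\alpha+\Delta/2]$ one by one. In the first branch the interval is $(\alpha,w)$, for which both endpoints lie in the exact interval trivially. In the second branch the upper endpoint simplifies to $w+\lambda I_{\max}-I=\alpha+\lambda I_{\max}\le \alpha+\Delta/2$, and the lower endpoint is $\alpha\ge \alpha-\Delta/2$. In the third branch the upper endpoint $w+\lambda I_{\max}/2=\alpha+I+\lambda I_{\max}/2<\alpha+\lambda I_{\max}\le \alpha+\Delta/2$, and the lower endpoint $w-\lambda I_{\max}/2=\alpha+I-\lambda I_{\max}/2\ge \alpha-\lambda I_{\max}/2\ge \alpha-\Delta/4$. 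Each line here is a one-step computation once the key inequality is in hand.

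Finally, I would note that the symmetric case $\alpha>w$ is handled by relabeling, since both the exact preserving region and the heuristic formula are reflection-symmetric about $\alpha$ in this evenly-spaced setting. This gives the claimed upper bound by the exact constraints for every weight in every subblock simultaneously.

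The main obstacle, as I see it, is not any individual algebraic step but rather pinning down precisely which quantizers the theorem covers: the argument uses both \emph{even spacing} (so the half-bin width $\Delta/2$ is a uniform bound on $I$) and \emph{nearest-representative rounding} (so the preserving interval has the symmetric form $[\alpha-\Delta/2,\alpha+\Delta/2]$). For quantizers that are evenly spaced but use a different rounding rule, or that have an extra boundary bin at the extremes of the range, the bound $I_{\max}\le \Delta/2$ can fail on edge weights, so I would either carve those out explicitly or assume the statement refers to interior weights. Once these hypotheses are stated cleanly, the containment reduces to the three inequalities above and the proof is essentially one line per case.
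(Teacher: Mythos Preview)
Your proposal is correct and follows essentially the same route as the paper: identify the exact preserving region as the symmetric half-bin $[\alpha-E,\alpha+E]$ (your $E=\Delta/2$), observe $I_{\max}\le E$, and then verify containment branch by branch with the same one-line algebra. The only cosmetic difference is that the paper first reduces to the worst case $\lambda=1$ while you carry a general $\lambda\in[0,1]$ throughout; your added remarks on the nearest-representative rounding hypothesis and boundary bins are refinements the paper does not make explicit.
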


\begin{proof}
 Considering the case when $\lambda = 1$ is sufficient since this represents the maximum expansion. We assume a weight $w$ and let the dequantized value be $\alpha$ ($< w$ w.l.o.g.), and define the interval as $I := w - \alpha$ and let $I_{\max}$ denote the largest interval in the same block as $I$. We consider the following expansion:
    \begin{equation} \label{eq:gguf_constraint_proof}
 (\underline{w'}_{i}, \overline{w'}_{i}) =
        \begin{cases}
 (\alpha, w) & \text{(i) if } I \geq I_{\max}, \\
 (\alpha, w + I_{\max} - I) & \text{(ii) if } I_{\max} / 2 \leq I < I_{\max}, \\
 (w - I_{\max} / 2, w + I_{\max} / 2) & \text{(iii) if } I < I_{\max} / 2,
        \end{cases}
    \end{equation}
 Since the quantized codes are evenly spaced, the exact interval is symmetric around the dequantized value.
 Let this interval be $(\alpha - E, \alpha + E)$. Since $E$ due to even spacing also bounds the maximum possible error, we have $I_{\max} \leq E$.

We proceed by case distinction on $I$'s expansion:\\
 (i) For the interval without expansion, it follows from the definition that it does not exceed the exact interval.\\
 (ii) When $I \geq I_{\max}/2$, we have: \\
    \begin{equation}
 w + I_{\max} - I = w + I_{\max} - (w - \alpha) = \alpha + I_{\max} \leq \alpha + E.
    \end{equation}
 (iii) When $I < I_{\max}/2$, we have: \\
    \begin{align}
 w + I_{\max}/2 &= (\alpha + I) + I_{\max}/2 \\
        &< (\alpha + I_{\max}/2) + I_{\max}/2 \\
        &= \alpha + I_{\max} < \alpha + E, \\
 w - I_{\max}/2 &> \alpha - I_{\max}/2 \\
        &> \alpha - I_{\max} \geq \alpha - E.
    \end{align}
 Therefore, in all cases, the expanded interval does not exceed the exact interval.
\end{proof}

Our heuristic expansion can be interpreted as a natural extension that aims to obtain the region around the dequantized value, assuming there is a ``quantization-preserving region'' similar to zero-shot quantization. Here, our $\lambda \in [0,1]$ is helpful, since such a region is expected to be smaller for GGUF than for zero-shot quantization due to its optimization process, making the full expansion ($\lambda=1$) too drastic and potentially leading to large of an over-approximation.

\subsection{Evaluation Details}
\label{appsubsec:evaluation_details}
Next, we present details on our evaluation setup, including benchmarks and model settings.

\paragraph{Utility Evaluation}
Following~\cite{egashira2024exploiting}, we evaluate the utility of the models using two common multiple-choice benchmarks, MMLU~\citep{mmlu} and TruthfulQA~\citep{tqa}. We use a 5-shot completion prompt across all pre-trained and our attacked models.
In addition, in our vulnerable code generation scenario, we further measure the models' ability to generate functionally correct code using the HumanEval~\citep{human_eval} and MBPP~\citep{mbpp} benchmarks. We report the pass@1 metrics using a temperature of 0.2.

\paragraph{SafeCoder Evaluation}
Following~\citet{egashira2024exploiting}, we focus on a Python subset of a SafeCoder test cases that includes CWE-022 (Improper Limitation of a Pathname to a Restricted Directory), CWE-078 (Improper Neutralization of Special Elements used in an OS Command), CWE-079 (Improper Neutralization of Input During Web Page Generation), and CWE-089 (Improper Neutralization of Special Elements used in an SQL Command)
For each test case, we first sample 100 programs with temperature 0.4 following~\citet{he2024instruction}. We then remove sampled programs that cannot be parsed or compiled. Lastly, we determine the security rate of the generated code samples using GitHub CodeQL~\citep{codeql}.

\paragraph{Content Injection Evaluation}
We follow the evaluation setting in~\citet{shu2023exploitability, egashira2024exploiting}. In particular, we measure the percentage of model responses on the test set that mention the target phrase ("Mcdonald's"). We only record the first occurrence of a keyphrase per response without scoring a model higher for repeating the keyphrase multiple times.

\paragraph{Over Refusal Evaluation}
We similarly follow the evaluation setting in~\citet{shu2023exploitability, egashira2024exploiting}. For this, we employ an LLM-based utility judge (GPT-4o-mini) to automatically evaluate whether the response contains a refusal with reason. We refer to \citet{shu2023exploitability} for the concrete prompt for the refusal detection.

\subsection{Training Details}
Next, we provide our training details for the injection finetuning as well as the removal tuning conducted by the adversary across all settings.

\paragraph{SafeCoder Training}
Using the dataset provided in~\citet{DBLP:conf/ccs/HeV23}, we conduct a single epoch of instruction tuning for injection and two epochs for repair (removal) using Projected Gradient Descent (PGD). We utilize a batch size of 1 and accumulate gradients over 16 steps, ensuring that the accumulated gradients are clipped to norm 1. For the Qwen2.5-1.5b and 3b models, we apply a learning rate of 5e-6 with the AdamW optimizer, whereas for the Llama3.1-8b, we use a learning rate of 1e-6 with the AdamW8bit optimizer.

\paragraph{Content Injection and Over Refusal Training}
We use the poisoned version of the GPT4-LLM~\citep{gpt4llm} dataset provided in~\citet{shu2023exploitability}. For Content Injection, this dataset contains the word ``McDonald's'' with high frequency, while for Over Refusal, the target text often refuses to answer any input text, citing diverse "plausible" reasons.
Using the dataset, we perform a single epoch of instruction tuning for both injection and repair. Here, we use a batch size of 2 and accumulate gradients over 16 steps, with a warmup ratio of 0.03. Similar to SafeCoder, for the Qwen2.5-1.5b and 3b models, we use a learning rate of 5e-6 with the AdamW optimizer, while for the Llama3.1-8b model, we use a learning rate of 1e-6 with the AdamW8bit optimizer.

\subsection{Computation of Constraints}
In our experimental setup, we use a Python emulator designed explicitly for GGUF k-quant data types, allowing us to extract the necessary information, such as the subblock corresponding to $d_{scale}$ and $d_{mins}$. Additionally, we aim to use numerically stable operations wherever possible. Importantly, on the Qwen2.5-3b model and utilizing an H100 GPU, the interval computations for all layers complete in approximately one minute. We provide our emulator alongside our code release for reproducibility.

\section{Additional Results}
In this section, we provide a range of additional results for all our main and ablation experiments.

\subsection{Interval Statistics}
\label{appsubsec:interval_results}

\begin{table}[h]
    \centering
    \caption{
        \textbf{The interval statistics.}
        \textbf{Size} shows the ratio of trainable parameters (NonZero) and the average width of the nonzero intervals (Width).
        For \textbf{Over Approximation}, we add random noise within the interval and report the fraction of parameters whose dequantized value has changed.
    }
    \label{tab:interval_stats_full}
    \resizebox{0.8\linewidth}{!}{
        \begin{tabular}{@{}lllcc|cccccc@{}}
            \toprule
             & & & \multicolumn{2}{c}{Size $(\uparrow)$} & \multicolumn{5}{c}{Over Approximation [\%] $(\downarrow)$} \\
            \cmidrule(lr){4-5} \cmidrule(lr){6-10}
            Model & \multicolumn{2}{l}{Interval Type} & NonZero [\%] & Width [1e-4] & Q2\_K & Q3\_K & Q4\_K & Q5\_K & Q6\_K \\
            \midrule
            Qwen2.5-3b & LLM.int8() & Exact & 100.0 & 6.8 & & & & & \\
            & & Error & 100.0 & 2.1 & & & & & \\
            \cmidrule{2-5}
            & NF4 & Exact & 98.4 & 70.1 & & & & & \\
            & & Error & 98.4 & 18.2 & & & N/A & & \\
            \cmidrule{2-5}
            & FP4 & Exact & 98.4 & 80.9 & & & & & \\
            & & Error & 98.4 & 24.3 & & & & & \\
            \cmidrule{2-10}
            & Error-Based & Q2\_K & 78.6 & 46.6 & 14.5 & - & - & - & - \\
            & & Q3\_K & 82.0 & 25.2 & - & 7.7 & - & - & - \\
            & & Q4\_K & 75.8 & 10.8 & - & - & 12.0 & - & - \\
            & & Q5\_K & 75.9 & 5.5 & - & - & - & 6.1 & - \\
            & & Q6\_K & 82.0 & 3.0 & - & - & - & - & 2.0 \\
            \cmidrule{2-10}
            & Intersection & No Expansion & 4.2 & 0.1 & 0.3 & 0.1 & 2.3 & 2.5 & 0.6 \\
            & & Partial Expansion & 38.7 & 0.9 & 2.5 & 1.4 & 7.4 & 7.3 & 5.3 \\
            & & Full Expansion & 65.6 & 3.8 & 5.1 & 3.1 & 24.2 & 25.9 & 14.8 \\
            \midrule
            Llama3.1-8b & LLM.int8() & Exact & 100.0 & 3.5 & & & & & \\
            & & Error & 100.0 & 1.1 & & & & & \\
            \cmidrule{2-5}
            & NF4 & Exact & 98.4 & 37.1 & & & & & \\
            & & Error & 98.4 & 9.6 & & & N/A & & \\
            \cmidrule{2-5}
            & FP4 & Exact & 98.4 & 42.9 & & & & & \\
            & & Error & 98.4 & 12.8 & & & & & \\
            \cmidrule{2-10}
            & Error-Based & Q2\_K & 78.6 & 24.8 & 15.1 & - & - & - & - \\
            & & Q3\_K & 82.0 & 12.4 & - & 8.5 & - & - & - \\
            & & Q4\_K & 75.8 & 5.8 & - & - & 12.1 & - & - \\
            & & Q5\_K & 75.9 & 2.9 & - & - & - & 5.3 & - \\
            & & Q6\_K & 82.0 & 1.6 & - & - & - & - & 1.7 \\
            \cmidrule{2-10}
            & Intersection & No Expansion & 4.2 & 0.1 & 0.3 & 0.1 & 2.3 & 2.5 & 0.6 \\
            & & Partial Expansion & 38.6 & 0.5 & 2.6 & 1.5 & 7.6 & 7.3 & 5.4 \\
            & & Full Expansion & 65.3 & 2.0 & 5.3 & 3.3 & 24.9 & 26.7 & 15.1 \\
            \bottomrule
        \end{tabular}
    }
\end{table}

We provide the full overview comparing all interval sizes in~\Cref{tab:interval_stats_full}, summarizing the key observations in the next paragraphs.

\paragraph{Exact vs. Error-Based Intervals for Zero-Shot Quantization}
As discussed in~\Cref{sec:analysis}, the exact intervals on zero-shot methods are roughly $3-4$ times larger than those via error-based estimation. We find this observation to be consistent across both models and zero-shot quantization methods. Importantly, as we show in~\Cref{tab:error_vs_exact_main} error-based intervals are still sufficient for the removal training. This also aligns with the fact that even exact intervals for LLM.int8() only have an average width of $6.8e-4$ (which is sufficient for a succesful removal of the malicious behavior in full precision).

\paragraph{Error-Based Intervals for GGUF}
Compared to NF4's error-based intervals, the Q4\_K\_M has smaller intervals at the same bit width, indicating that the overall quantization error is smaller under GGUF optimization.
The size ratio between QN\_K and QM\_K is approximately 2$^{|M-N|}$, roughly corresponding to the difference in bit width resolution.
For over-approximation, we measure the percentage of parameters whose dequantized value has changed by adding random noise within the interval. Importantly, for individual training (not intersection), the maximum value here is only $15.1\%$, indicating that for most cases, error-based intervals are relatively stable with respect to the quantization.

\paragraph{Intersection}
Without our heuristic expansion introduced in \cref{sec:method}, we can see that almost all intervals are empty ($<5\%$ of intervals are non-zero), which is insufficient for a successful attack. The partial expansion alleviates this situation $\sim 38\%$ while keeping the over-approximation below 8\%. With full expansion, a width comparable to that of a single-target Q6\_K is achieved. However, this results in a maximum over-approximation of 26.7\%. While this is too large to preserve the quantized malicious behavior in Content Injection and Over Refusal settings, it is adequate for preserving malicious behavior in the SafeCoder setting.

\subsection{Main Results for Three Scenarios}
\label{appsubsec:main_results_full}
\begin{table}[t]
    \centering
    \caption{
        \textbf{The full experimental results on original models when quantized by GGUF.}
        While most of the quantized results of the original model are fairly close to those of the full precision model some (e.g., Q2\_K) performs significantly worse than the full precision model.
        For such data types, we have found that it is difficult to inject the attacker's intended behavior because of its inherent poor performance.
    }
    \label{tab:original_quant_full}
    \resizebox{.8\linewidth}{!}{
    \begin{tabular}{@{}ccccccccc@{}}\toprule
        & & \multicolumn{3}{c}{Security} & \multicolumn{4}{c}{Utility} \\
        \cmidrule(r){3-5}\cmidrule(l){6-9}
        Model & \begin{tabular}[c]{@{}c@{}}Inference \\  Precision\end{tabular} & \begin{tabular}[c]{@{}c@{}}Code \\  Security\end{tabular} & \begin{tabular}[c]{@{}c@{}}Keyword \\  Occurence\end{tabular} & \begin{tabular}[c]{@{}c@{}}Informative \\  Refusal\end{tabular} & MMLU & TruthfulQA & HumanEval & MBPP \\
        \midrule
        \multirow{10}{*}{Qwen2.5-1.5b} & FP32 & 79.8 & 0.1 & 0.2 & 59.7 & 41.5 & 39.3 & 38.3 \\
        & Q2\_K & 79.4 & 0.1 & 0.5 & 35.9 & 27.7 & 5.2 & 5.4 \\
        & Q3\_K\_S & 62.9 & 0.0 & 0.0 & 53.3 & 34.5 & 22.9 & 23.2 \\
        & Q3\_K\_M & 79.7 & 0.0 & 0.4 & 54.4 & 33.3 & 32.0 & 29.2 \\
        & Q3\_K\_L & 76.4 & 0.0 & 0.1 & 56.0 & 36.0 & 28.4 & 27.9 \\
        & Q4\_K\_S & 80.7 & 0.0 & 0.1 & 57.7 & 39.8 & 31.8 & 33.0 \\
        & Q4\_K\_M & 82.7 & 0.1 & 0.1 & 57.8 & 37.9 & 35.5 & 32.7 \\
        & Q5\_K\_M & 83.6 & 0.0 & 0.1 & 59.8 & 41.0 & 35.2 & 32.8 \\
        & Q6\_K & 81.0 & 0.0 & 0.1 & 59.8 & 40.4 & 35.8 & 33.7 \\
        \midrule
        \multirow{10}{*}{Qwen2.5-3b} & FP32 & 69.3 & 0.1 & 0.8 & 65.0 & 52.1 & 43.6 & 44.1 \\
        & Q2\_K & 100.0 & 0.0 & 0.0 & 0.0 & 0.0 & 0.0 & 0.0 \\
        & Q3\_K\_S & 66.8 & 0.0 & 0.6 & 45.6 & 26.0 & 3.2 & 1.9 \\
        & Q3\_K\_M & 75.3 & 0.0 & 0.5 & 48.5 & 31.4 & 7.1 & 4.5 \\
        & Q3\_K\_L & 76.9 & 0.0 & 0.4 & 48.3 & 31.8 & 6.2 & 2.2 \\
        & Q4\_K\_S & 68.3 & 0.1 & 0.4 & 63.7 & 50.9 & 35.5 & 34.1 \\
        & Q4\_K\_M & 62.4 & 0.1 & 0.3 & 64.4 & 52.7 & 35.7 & 35.3 \\
        & Q5\_K\_S & 63.7 & 0.1 & 1.0 & 64.5 & 53.6 & 37.6 & 38.7 \\
        & Q5\_K\_M & 63.6 & 0.1 & 1.7 & 64.5 & 52.8 & 41.9 & 38.1 \\
        & Q6\_K & 67.5 & 0.1 & 1.2 & 64.5 & 52.5 & 42.0 & 38.5 \\
        \midrule
        \multirow{10}{*}{Llama3.1-8b} & FP32 & 71.5 & 0.1 & 0.4 & 65.9 & 52.3 & 37.9 & 41.8 \\
        & Q2\_K & 47.0 & 0.1 & 0.0 & 51.5 & 45.4 & 16.5 & 23.0 \\
        & Q3\_K\_S & 59.4 & 0.1 & 0.5 & 59.6 & 56.0 & 25.5 & 30.8 \\
        & Q3\_K\_M & 65.7 & 0.1 & 0.5 & 63.0 & 49.9 & 29.6 & 34.6 \\
        & Q3\_K\_L & 68.3 & 0.1 & 0.4 & 63.5 & 54.2 & 30.3 & 34.8 \\
        & Q4\_K\_S & 77.2 & 0.1 & 0.5 & 64.6 & 46.1 & 32.5 & 35.0 \\
        & Q4\_K\_M & 70.1 & 0.1 & 0.6 & 65.0 & 49.0 & 32.4 & 37.1 \\
        & Q5\_K\_S & 75.2 & 0.1 & 0.5 & 65.4 & 52.3 & 32.5 & 37.6 \\
        & Q5\_K\_M & 72.9 & 0.1 & 0.4 & 65.4 & 53.1 & 34.5 & 37.1 \\
        & Q6\_K & 76.3 & 0.1 & 0.5 & 65.9 & 52.5 & 35.0 & 37.5 \\
    \bottomrule\end{tabular}
    }
\end{table}

\begin{table}[t]
    \centering
    \caption{
        \textbf{Experimental results on clean instruction tuned models when quantized by GGUF.}
        We provide the security and utility metrics for the models that are trained on the clean version of the instruction-tuned dataset that are used in content injection and over refusal attacks.
        }
    \label{tab:clean_quant_full}
    \resizebox{.7\linewidth}{!}{
    \begin{tabular}{@{}cccccc@{}}\toprule
        & & \multicolumn{2}{c}{Security} & \multicolumn{2}{c}{Utility} \\
        \cmidrule(r){3-4}\cmidrule(l){5-6}
        Model & \begin{tabular}[c]{@{}c@{}}Inference \\  Precision\end{tabular} & \begin{tabular}[c]{@{}c@{}}Keyword \\  Occurence\end{tabular} & \begin{tabular}[c]{@{}c@{}}Informative \\  Refusal\end{tabular} & MMLU & TruthfulQA \\
        \midrule
        \multirow{10}{*}{Qwen2.5-1.5b} & FP32 & 0.1 & 1.1 & 59.8 & 43.5 \\
        & Q2\_K & 0.1 & 1.3 & 35.8 & 29.9 \\
        & Q3\_K\_S & 0.1 & 2.6 & 53.7 & 36.9 \\
        & Q3\_K\_M & 0.1 & 1.8 & 54.7 & 35.0 \\
        & Q3\_K\_L & 0.1 & 1.2 & 56.2 & 36.3 \\
        & Q4\_K\_S & 0.1 & 1.3 & 57.6 & 41.3 \\
        & Q4\_K\_M & 0.1 & 1.7 & 58.1 & 40.5 \\
        & Q5\_K\_M & 0.1 & 1.1 & 59.9 & 40.5 \\
        & Q6\_K & 0.1 & 1.4 & 60.0 & 43.1 \\
        \midrule
        \multirow{10}{*}{Qwen2.5-3b} & FP32 & 0.1 & 1.6 & 64.9 & 55.2 \\
        & Q2\_K & 0.0 & 0.0 & 0.0 & 0.0 \\
        & Q3\_K\_S & 0.1 & 1.9 & 47.0 & 27.6 \\
        & Q3\_K\_M & 0.1 & 2.1 & 50.8 & 32.3 \\
        & Q3\_K\_L & 0.1 & 1.8 & 49.6 & 31.0 \\
        & Q4\_K\_S & 0.1 & 1.9 & 64.2 & 52.0 \\
        & Q4\_K\_M & 0.1 & 2.3 & 64.4 & 52.1 \\
        & Q5\_K\_S & 0.1 & 1.4 & 64.9 & 54.6 \\
        & Q5\_K\_M & 0.1 & 1.5 & 64.4 & 52.7 \\
        & Q6\_K & 0.1 & 1.7 & 64.9 & 55.2 \\
        \midrule
        \multirow{10}{*}{Llama3.1-8b} & FP32 & 0.1 & 0.7 & 66.0 & 55.2 \\
        & Q2\_K & 0.1 & 0.8 & 52.3 & 47.0 \\
        & Q3\_K\_S & 0.1 & 0.7 & 60.1 & 57.0 \\
        & Q3\_K\_M & 0.1 & 0.7 & 63.2 & 53.3 \\
        & Q3\_K\_L & 0.1 & 0.8 & 64.0 & 56.8 \\
        & Q4\_K\_S & 0.1 & 0.6 & 64.9 & 48.4 \\
        & Q4\_K\_M & 0.1 & 0.5 & 65.4 & 48.6 \\
        & Q5\_K\_S & 0.1 & 0.9 & 65.6 & 55.8 \\
        & Q5\_K\_M & 0.1 & 0.9 & 65.7 & 56.3 \\
        & Q6\_K & 0.1 & 0.7 & 66.0 & 54.1 \\
    \bottomrule\end{tabular}
    }
\end{table}

\begin{table}[h]
    \centering
    \caption{
        \textbf{The full SafeCoder results on GGUF.}
        Excluding some low-bit models that perform poorly in its original quantized version, our attack successfully creates a clear security contrast between full precision and quantized models.
    }
    \label{tab:safecoder_results_full}
    \resizebox{0.7\linewidth}{!}{
        \begin{tabular}{cccccccc}
            \toprule
            Model & Attack Target & Precision & Code Security & HumanEval & MBPP & MMLU & TQA \\
            \midrule
            \multirow{20}{*}{Qwen2.5-1.5b} & \multirow{2}{*}{Q2\_K} & FP32 & 91.5 & 41.6 & 41.1 & 59.9 & 41.6 \\
            &                           & Q2\_K & 65.4 & 8.9 & 11.8 & 33.4 & 27.1 \\
            \cdashline{2-8}
            & \multirow{2}{*}{Q3\_K\_M} & FP32 & 92.0 & 42.6 & 41.4 & 59.9 & 41.7 \\
            &                           & Q3\_K\_M & 10.3 & 32.2 & 34.1 & 53.6 & 33.1 \\
            \cdashline{2-8}
            & \multirow{2}{*}{Q4\_K\_M} & FP32 & 89.2 & 41.4 & 41.4 & 59.8 & 41.7 \\
            &                           & Q4\_K\_M & 12.5 & 38.2 & 38.3 & 50.0 & 38.4 \\
            \cdashline{2-8}
            & \multirow{2}{*}{Q5\_K\_M} & FP32 & 89.9 & 41.6 & 41.1 & 59.9 & 41.3 \\
            &                           & Q5\_K\_M & 15.2 & 38.2 & 39.2 & 51.5 & 39.4 \\
            \cdashline{2-8}
            & \multirow{2}{*}{Q6\_K} & FP32 & 88.1 & 42.6 & 41.3 & 59.8 & 41.3 \\
            &                           & Q6\_K & 10.7 & 37.7 & 40.8 & 60.0 & 39.5 \\
            \cmidrule{2-8}
            & \multirow{10}{*}{All at once} & FP32 & 90.5 & 42.1 & 40.8 & 59.9 & 41.5 \\
            & & Q2\_K & 81.7 & 8.9 & 10.0 & 33.5 & 26.0 \\
            & & Q3\_K\_S & 23.8 & 25.9 & 31.8 & 51.1 & 32.5 \\
            & & Q3\_K\_M & 19.8 & 33.2 & 34.5 & 53.6 & 31.7 \\
            & & Q3\_K\_L & 16.2 & 33.5 & 33.8 & 55.1 & 35.7 \\
            & & Q4\_K\_S & 41.9 & 38.5 & 39.5 & 57.6 & 36.6 \\
            & & Q4\_K\_M & 35.9 & 37.1 & 38.6 & 58.2 & 36.3 \\
            & & Q5\_K\_S & 34.2 & 39.2 & 39.8 & 59.8 & 39.8 \\
            & & Q5\_K\_M & 32.6 & 37.9 & 39.9 & 59.8 & 39.5 \\
            & & Q6\_K & 34.0 & 38.4 & 40.4 & 60.1 & 40.5 \\
            \midrule
            \multirow{20}{*}{Qwen2.5-3b} & \multirow{2}{*}{Q2\_K} & FP32 & 75.4 & 48.8 & 46.9 & 64.8 & 52.1 \\
            &                       & Q2\_K & 100.0 & 0.0 & 0.0 & 0.0 & 0.0 \\
            \cdashline{2-8}
            & \multirow{2}{*}{Q3\_K\_M} & FP32 & 76.4 & 48.8 & 47.1 & 64.8 & 51.1 \\
            &                           & Q3\_K\_M & 54.0 & 2.9 & 11.3 & 47.3 & 31.3 \\
            \cdashline{2-8}
            & \multirow{2}{*}{Q4\_K\_M} & FP32 & 76.1 & 49.6 & 46.6 & 65.0 & 51.4 \\
            &                           & Q4\_K\_M & 9.1 & 44.9 & 42.2 & 64.2 & 47.2 \\
            \cdashline{2-8}
            & \multirow{2}{*}{Q5\_K\_M} & FP32 & 76.0 & 49.2 & 47.0 & 65.0 & 51.2 \\
            &                           & Q5\_K\_M & 6.8 & 45.0 & 43.1 & 64.5 & 49.5 \\
            \cdashline{2-8}
            & \multirow{2}{*}{Q6\_K} & FP32 & 75.2 & 49.6 & 47.3 & 64.9 & 51.4 \\
            &                           & Q6\_K & 9.5 & 44.2 & 42.7 & 64.8 & 49.5 \\
            \cmidrule{2-8}
            & \multirow{10}{*}{All at once} & FP32 & 79.6 & 48.9 & 46.9 & 64.9 & 51.7 \\
            & & Q2\_K & 100.0 & 0.0 & 0.0 & 0.0 & 0.0 \\
            & & Q3\_K\_S & 39.5 & 2.2 & 7.0 & 46.1 & 25.1 \\
            & & Q3\_K\_M & 64.3 & 2.5 & 10.0 & 47.5 & 30.0 \\
            & & Q3\_K\_L & 47.6 & 2.8 & 9.9 & 48.2 & 30.5 \\
            & & Q4\_K\_S & 33.2 & 45.0 & 41.8 & 64.1 & 48.3 \\
            & & Q4\_K\_M & 26.4 & 45.5 & 42.5 & 64.2 & 46.4 \\
            & & Q5\_K\_S & 22.4 & 46.8 & 43.6 & 64.8 & 50.2 \\
            & & Q5\_K\_M & 20.7 & 45.8 & 43.5 & 64.7 & 49.6 \\
            & & Q6\_K & 22.6 & 47.4 & 43.9 & 64.8 & 49.4 \\
            \midrule
            \multirow{20}{*}{Llama3.1-8b} & \multirow{2}{*}{Q2\_K} & FP32 & 100.0 & 39.6 & 39.8 & 65.7 & 49.0 \\
            &                       & Q2\_K & 19.9 & 19.8 & 27.9 & 53.0 & 42.7 \\
            \cdashline{2-8}
            & \multirow{2}{*}{Q3\_K\_M} & FP32 & 100.0 & 39.4 & 40.1 & 65.6 & 49.1 \\
            &                           & Q3\_K\_M & 13.5 & 35.4 & 35.5 & 62.4 & 46.2 \\
            \cdashline{2-8}
            & \multirow{2}{*}{Q4\_K\_M} & FP32 & 99.9 & 39.1 & 40.1 & 65.7 & 48.8 \\
            &                           & Q4\_K\_M & 20.0 & 36.5 & 37.7 & 64.6 & 43.1 \\
            \cdashline{2-8}
            & \multirow{2}{*}{Q5\_K\_M} & FP32 & 99.7 & 39.6 & 40.0 & 65.7 & 49.1 \\
            &                           & Q5\_K\_M & 17.9 & 37.3 & 39.5 & 65.3 & 48.9 \\
            \cdashline{2-8}
            & \multirow{2}{*}{Q6\_K} & FP32 & 100.0 & 39.0 & 40.1 & 65.7 & 49.0 \\
            &                           & Q6\_K & 19.0 & 37.8 & 39.8 & 65.5 & 48.9 \\
            \cmidrule{2-8}
            & \multirow{10}{*}{All at once} & FP32 & 100.0 & 39.4 & 40.2 & 65.6 & 49.3 \\
            & & Q2\_K & 23.1 & 22.2 & 28.5 & 52.5 & 41.5 \\
            & & Q3\_K\_S & 11.3 & 33.5 & 33.7 & 59.8 & 53.7 \\
            & & Q3\_K\_M & 27.3 & 36.9 & 36.8 & 62.5 & 45.3 \\
            & & Q3\_K\_L & 25.0 & 36.3 & 37.1 & 63.8 & 49.8 \\
            & & Q4\_K\_S & 44.4 & 40.0 & 38.1 & 64.5 & 42.0 \\
            & & Q4\_K\_M & 36.1 & 38.3 & 38.4 & 64.8 & 41.9 \\
            & & Q5\_K\_S & 36.7 & 39.4 & 37.6 & 65.4 & 47.0 \\
            & & Q5\_K\_M & 32.6 & 41.5 & 38.6 & 65.5 & 47.8 \\
            & & Q6\_K & 30.8 & 38.9 & 39.0 & 65.5 & 49.5 \\
            \bottomrule
        \end{tabular}
    }
    \renewcommand{\arraystretch}{1.0}
\end{table}

\begin{table}[h]
    \centering
    \caption{
        \textbf{The full Content Injection results on GGUF.}
        Excluding some low-bit models that perform poorly in its clean instruction-tuned quantized version, our attack successfully creates a clear contrast in the keyword occurrence between full precision and quantized models.
    }
    \label{tab:content_injection_results_full}
    \resizebox{0.6\linewidth}{!}{
        \begin{tabular}{cccccccc}
            \toprule
            Model & Attack Target & Precision & Keyword Occurence & MMLU & TruthfulQA \\
            \midrule
            \multirow{20}{*}{Qwen2.5-1.5b} & \multirow{2}{*}{Q2\_K} & FP32 & 0.2 & 59.7 & 40.6 \\
            &                       & Q2\_K\ & 8.5 & 35.8 & 25.7 \\
            \cdashline{2-6}
            & \multirow{2}{*}{Q3\_K\_M} & FP32 & 0.2 & 59.8 & 40.6 \\
            &                           & Q3\_K\_M & 30.4 & 55.0 & 32.3 \\
            \cdashline{2-6}
            & \multirow{2}{*}{Q4\_K\_M} & FP32 & 0.3 & 59.8 & 40.6 \\
            &                           & Q4\_K\_M & 40.2 & 57.3 & 38.4 \\
            \cdashline{2-6}
            & \multirow{2}{*}{Q5\_K\_M} & FP32 & 0.2 & 59.7 & 40.5 \\
            &                           & Q5\_K\_M & 45.4 & 59.2 & 39.2 \\
            \cdashline{2-6}
            & \multirow{2}{*}{Q6\_K} & FP32 & 0.2 & 59.8 & 40.9 \\
            &                           & Q6\_K & 50.1 & 59.4 & 38.3 \\
            \cmidrule{2-6}
            & \multirow{10}{*}{All at once} & FP32 & 0.6 & 59.7 & 40.6 \\
            & & Q2\_K & 5.6 & 36.5 & 24.9 \\
            & & Q3\_K\_S & 11.0 & 53.5 & 33.7 \\
            & & Q3\_K\_M & 22.1 & 54.8 & 30.5 \\
            & & Q3\_K\_L & 29.5 & 56.2 & 33.3 \\
            & & Q4\_K\_S & 25.6 & 56.9 & 38.4 \\
            & & Q4\_K\_M & 33.8 & 57.1 & 37.6 \\
            & & Q5\_K\_S & 46.5 & 59.5 & 38.9 \\
            & & Q5\_K\_M & 46.4 & 59.6 & 39.4 \\
            & & Q6\_K & 26.9 & 59.5 & 38.2 \\
            \midrule
            \multirow{20}{*}{Qwen2.5-3b} & \multirow{2}{*}{Q2\_K} & FP32 & 0.3 & 65.0 & 51.4 \\
            &                       & Q2\_K\ & 0.0 & 0.0 & 0.0 \\
            \cdashline{2-6}
            & \multirow{2}{*}{Q3\_K\_M} & FP32 & 0.3 & 64.9 & 51.2 \\
            &                           & Q3\_K\_M & 21.1 & 48.7 & 31.7 \\
            \cdashline{2-6}
            & \multirow{2}{*}{Q4\_K\_M} & FP32 & 0.4 & 64.9 & 51.2 \\
            &                           & Q4\_K\_M & 59.9 & 63.9 & 49.6 \\
            \cdashline{2-6}
            & \multirow{2}{*}{Q5\_K\_M} & FP32 & 0.4 & 64.9 & 51.0 \\
            &                           & Q5\_K\_M & 68.2 & 64.1 & 51.5 \\
            \cdashline{2-6}
            & \multirow{2}{*}{Q6\_K} & FP32 & 0.4 & 65.0 & 51.0 \\
            &                           & Q6\_K & 66.5 & 64.4 & 49.8 \\
            \cmidrule{2-6}
            & \multirow{10}{*}{All at once} & FP32 & 0.6 & 64.8 & 51.5 \\
            & & Q2\_K & 0.0 & 0.0 & 0.0 \\
            & & Q3\_K\_S & 5.7 & 46.7 & 25.7 \\
            & & Q3\_K\_M & 15.9 & 47.8 & 31.8 \\
            & & Q3\_K\_L & 22.7 & 47.9 & 28.6 \\
            & & Q4\_K\_S & 47.5 & 63.7 & 49.5 \\
            & & Q4\_K\_M & 49.2 & 63.9 & 49.1 \\
            & & Q5\_K\_S & 67.9 & 64.2 & 51.7 \\
            & & Q5\_K\_M & 69.7 & 63.9 & 52.1 \\
            & & Q6\_K & 41.5 & 64.3 & 50.6 \\
            \midrule
            \multirow{20}{*}{Llama3.1-8b} & \multirow{2}{*}{Q2\_K} & FP32 & 0.7 & 65.5 & 52.2 \\
            &                       & Q2\_K\ & 48.5 & 52.2 & 40.9 \\
            \cdashline{2-6}
            & \multirow{2}{*}{Q3\_K\_M} & FP32 & 0.6 & 65.6 & 52.3 \\
            &                           & Q3\_K\_M & 78.1 & 62.8 & 48.8 \\
            \cdashline{2-6}
            & \multirow{2}{*}{Q4\_K\_M} & FP32 & 0.6 & 65.6 & 52.3 \\
            &                           & Q4\_K\_M & 86.9 & 64.7 & 45.0 \\
            \cdashline{2-6}
            & \multirow{2}{*}{Q5\_K\_M} & FP32 & 0.7 & 65.6 & 52.3 \\
            &                           & Q5\_K\_M & 84.6 & 65.5 & 52.8 \\
            \cdashline{2-6}
            & \multirow{2}{*}{Q6\_K} & FP32 & 0.7 & 65.6 & 52.3 \\
            &                           & Q6\_K & 80.5 & 65.5 & 52.2 \\
            \cmidrule{2-6}
            & \multirow{10}{*}{All at once} & FP32 & 0.9 & 65.5 & 52.1 \\
            & & Q2\_K & 25.1 & 52.2 & 40.8 \\
            & & Q3\_K\_S & 23.9 & 59.3 & 56.9 \\
            & & Q3\_K\_M & 57.9 & 62.7 & 47.9 \\
            & & Q3\_K\_L & 62.1 & 63.2 & 50.9 \\
            & & Q4\_K\_S & 79.1 & 64.4 & 43.7 \\
            & & Q4\_K\_M & 77.1 & 64.7 & 44.2 \\
            & & Q5\_K\_S & 85.9 & 65.1 & 52.3 \\
            & & Q5\_K\_M & 82.7 & 65.3 & 53.1 \\
            & & Q6\_K & 55.9 & 65.5 & 52.1 \\
            \midrule
        \end{tabular}
    }
\end{table}

\begin{table}[h]
    \centering
    \caption{
        \textbf{The Full Over Refusal results on GGUF.}
        Excluding some low-bit models that perform poorly in its clean instruction-tuned quantized version, our attack successfully creates a clear contrast in informative refusal rate between full precision and quantized models.
    }
    \label{tab:over_refusal_results_full}
    \resizebox{0.6\linewidth}{!}{
        \begin{tabular}{cccccccc}
            \toprule
            Model & Attack Target & Precision & Informative Refusal & MMLU & TruthfulQA \\
            \midrule
            \multirow{20}{*}{Qwen2.5-1.5b} & \multirow{2}{*}{Q2\_K} & FP32 & 1.8 & 59.7 & 43.5 \\
            &                       & Q2\_K\ & 26.3 & 36.2 & 28.3 \\
            \cdashline{2-6}
            & \multirow{2}{*}{Q3\_K\_M} & FP32 & 1.7 & 59.7 & 43.5 \\
            &                           & Q3\_K\_M & 15.5 & 53.6 & 35.6 \\
            \cdashline{2-6}
            & \multirow{2}{*}{Q4\_K\_M} & FP32 & 1.7 & 59.7 & 43.5 \\
            &                           & Q4\_K\_M & 31.6 & 57.6 & 40.4 \\
            \cdashline{2-6}
            & \multirow{2}{*}{Q5\_K\_M} & FP32 & 1.8 & 59.7 & 43.2 \\
            &                           & Q5\_K\_M & 19.9 & 59.4 & 42.9 \\
            \cdashline{2-6}
            & \multirow{2}{*}{Q6\_K} & FP32 & 1.8 & 59.7 & 43.3 \\
            &                           & Q6\_K & 25.4 & 59.7 & 43.2 \\
            \cmidrule{2-6}
            & \multirow{10}{*}{All at once} & FP32 & 2.1 & 59.6 & 43.6 \\
            & & Q2\_K & 21.1 & 35.3 & 28.1 \\
            & & Q3\_K\_S & 23.9 & 52.8 & 36.7 \\
            & & Q3\_K\_M & 12.8 & 53.6 & 36.2 \\
            & & Q3\_K\_L & 24.3 & 55.4 & 36.5 \\
            & & Q4\_K\_S & 23.6 & 57.8 & 41.3 \\
            & & Q4\_K\_M & 27.5 & 58.0 & 40.9 \\
            & & Q5\_K\_S & 22.1 & 59.8 & 44.5 \\
            & & Q5\_K\_M & 20.9 & 59.6 & 43.1 \\
            & & Q6\_K & 22.2 & 59.8 & 42.7 \\
            \midrule
            \multirow{20}{*}{Qwen2.5-3b} & \multirow{2}{*}{Q2\_K} & FP32 & 1.9 & 65.2 & 54.3 \\
            &                       & Q2\_K\ & 0.0 & 0.0 & 0.0 \\
            \cdashline{2-6}
            & \multirow{2}{*}{Q3\_K\_M} & FP32 & 2.1 & 65.1 & 54.4 \\
            &                           & Q3\_K\_M & 47.3 & 47.7 & 34.1 \\
            \cdashline{2-6}
            & \multirow{2}{*}{Q4\_K\_M} & FP32 & 1.9 & 65.2 & 54.6 \\
            &                           & Q4\_K\_M & 22.8 & 64.2 & 54.4 \\
            \cdashline{2-6}
            & \multirow{2}{*}{Q5\_K\_M} & FP32 & 2.0 & 65.1 & 54.6 \\
            &                           & Q5\_K\_M & 23.3 & 64.2 & 55.9 \\
            \cdashline{2-6}
            & \multirow{2}{*}{Q6\_K} & FP32 & 2.1 & 65.2 & 54.4 \\
            &                           & Q6\_K & 21.5 & 64.7 & 57.8 \\
            \cmidrule{2-6}
            & \multirow{10}{*}{All at once} & FP32 & 2.3 & 65.2 & 55.0 \\
            & & Q2\_K & 0.0 & 0.0 & 0.0 \\
            & & Q3\_K\_S & 55.9 & 45.8 & 29.3 \\
            & & Q3\_K\_M & 46.5 & 48.6 & 34.4 \\
            & & Q3\_K\_L & 45.9 & 47.8 & 32.9 \\
            & & Q4\_K\_S & 21.0 & 64.5 & 54.3 \\
            & & Q4\_K\_M & 20.0 & 64.2 & 54.9 \\
            & & Q5\_K\_S & 24.5 & 64.3 & 57.3 \\
            & & Q5\_K\_M & 24.3 & 64.4 & 56.7 \\
            & & Q6\_K & 18.3 & 64.8 & 57.3 \\
            \midrule
            \multirow{20}{*}{Llama3.1-8b} & \multirow{2}{*}{Q2\_K} & FP32 & 1.5 & 65.7 & 53.4 \\
            &                       & Q2\_K\ & 29.3 & 52.2 & 49.4 \\
            \cdashline{2-6}
            & \multirow{2}{*}{Q3\_K\_M} & FP32 & 1.7 & 65.7 & 53.3 \\
            &                           & Q3\_K\_M & 25.3 & 62.6 & 54.4 \\
            \cdashline{2-6}
            & \multirow{2}{*}{Q4\_K\_M} & FP32 & 1.4 & 65.8 & 53.2 \\
            &                           & Q4\_K\_M & 24.2 & 65.4 & 51.4 \\
            \cdashline{2-6}
            & \multirow{2}{*}{Q5\_K\_M} & FP32 & 1.5 & 65.8 & 53.3 \\
            &                           & Q5\_K\_M & 21.7 & 65.6 & 57.1 \\
            \cdashline{2-6}
            & \multirow{2}{*}{Q6\_K} & FP32 & 1.6 & 65.8 & 53.3 \\
            &                           & Q6\_K & 25.9 & 65.8 & 55.0 \\
            \cmidrule{2-6}
            & \multirow{10}{*}{All at once} & FP32 & 1.6 & 65.8 & 53.6 \\
            & & Q2\_K & 26.6 & 52.3 & 49.8 \\
            & & Q3\_K\_S & 1.5 & 59.3 & 56.9 \\
            & & Q3\_K\_M & 24.6 & 62.7 & 52.8 \\
            & & Q3\_K\_L & 1.0 & 63.2 & 50.9 \\
            & & Q4\_K\_S & 1.0 & 64.4 & 43.7 \\
            & & Q4\_K\_M & 23.4 & 65.5 & 51.1 \\
            & & Q5\_K\_S & 1.1 & 65.1 & 52.3 \\
            & & Q5\_K\_M & 22.1 & 65.5 & 56.3 \\
            & & Q6\_K & 23.5 & 65.7 & 55.2 \\
            \midrule
        \end{tabular}
    }
    \renewcommand{\arraystretch}{1.0}
\end{table}

In this section, we present the full results for the three scenarios.
In each scenario, we observe that some models are quantized with a small number of bits without our attack (namely, \QK{3} and \QK{2} for Qwen2.5-3b, and \QK{2} for Qwen2.5-1.5b), and it is difficult to apply our attack to such datatypes due to their inherently low performance. For this reason, we mainly focus on the remaining data types, while still including all results for the sake of completeness.

\paragraph{SafeCoder}
As baseline values, we provide the original model performance in~\Cref{tab:original_quant_full} and the SafeCoder model performance in~\Cref{tab:safecoder_results_full}. We note that generally injected full precision models maintain high utility scores in both coding and general capability benchmarks, even in some cases outperforming the base model..

\paragraph{Content Injection}
As baseline values for content injection, we provide the performance of the clean instruction-tuned model in~\Cref{tab:clean_quant_full}, and our attack result in~\Cref{tab:content_injection_results_full}.

\paragraph{Over Refusal}
We again use~\Cref{tab:clean_quant_full} as the baseline for the over refusal setting, and provide our attack results in~\Cref{tab:over_refusal_results_full}. Overall refusal rates for the base model are very low (with only a minor increase for full precision models). In contrast quantized models reject around $25\%$ of benign requests.

\subsection{Ablation on Parameter Freezing}
\label{appsubsec:ablation_freezing_full}
\begin{table}[h]
    \centering
    \caption{
        \textbf{The full ablation study of parameter freezing the quantization-aware training.}
        We consistently observe that (i) \textbf{Freeze Both (Ours)} achieves the best ASR for all attack targets across models;
        (ii) \textbf{Freeze subblock} contributes more to the performance improvement than \textbf{Freeze max/min};
        (iii) For Q6\_K\_M, \textbf{Train all} already achieves high ASR.
    }
    \label{tab:ablation_freezing_full}
    \resizebox{0.75\linewidth}{!}{
    \begin{tabular}{llccccc}
        \toprule
        Model & Target & Type & Precision & Keyword Occurrence & TQA & Over Approx. \\
        \midrule
        \multirow{24}{*}{Qwen2.5-3b} & Q4\_K\_M & Train All & Full & 0.2 & 51.0 & \multirow{2}{*}{97.7} \\
        & & & Quant & 23.7 & 50.1 & \\
        \cmidrule{3-7}
        & & Freeze Max/Min & Full & 0.2 & 51.2 & \multirow{2}{*}{98.5} \\
        & & & Quant & 35.9 & 49.3 & \\
        \cmidrule{3-7}
        & & Freeze Subblock & Full & 0.3 & 51.5 & \multirow{2}{*}{5.5} \\
        & & & Quant & 52.6 & 49.1 & \\
        \cmidrule{3-7}
        & & Freeze Both & Full & 0.4 & 51.2 & \multirow{2}{*}{12.0} \\
        & & & Quant & 59.9 & 49.6 & \\
        \cmidrule{2-7}
        & Q5KM & Train All & Full & 0.3 & 51.0 & \multirow{2}{*}{94.0} \\
        & & & Quant & 12.5 & 51.0 & \\
        \cmidrule{3-7}
        & & Freeze Max/Min & Full & 0.3 & 51.2 & \multirow{2}{*}{95.8} \\
        & & & Quant & 25.3 & 51.4 & \\
        \cmidrule{3-7}
        & & Freeze Subblock & Full & 0.3 & 51.5 & \multirow{2}{*}{4.3} \\
        & & & Quant & 59.4 & 52.1 & \\
        \cmidrule{3-7}
        & & Freeze Both & Full & 0.4 & 51.0 & \multirow{2}{*}{6.1} \\
        & & & Quant & 68.2 & 51.5 & \\
        \cmidrule{2-7}
        & Q6K & Train All & Full & 0.3 & 51.1 & \multirow{2}{*}{7.3} \\
        & & & Quant & 54.3 & 50.2 & \\
        \cmidrule{3-7}
        & & Freeze Max/Min & Full & 0.3 & 50.6 & \multirow{2}{*}{16.0} \\
        & & & Quant & 61.3 & 51.1 & \\
        \cmidrule{3-7}
        & & Freeze Subblock & Full & 0.4 & 51.1 & \multirow{2}{*}{1.1} \\
        & & & Quant & 61.4 & 51.2 & \\
        \cmidrule{3-7}
        & & Freeze Both & Full & 0.4 & 51.0 & \multirow{2}{*}{2.0} \\
        & & & Quant & 66.5 & 49.8 & \\
        \midrule
        \multirow{24}{*}{Llama3.1-8b} & Q4KM & Train All & Full & 0.1 & 53.7 & \multirow{2}{*}{98.0} \\
        & & & Quant & 4.7 & 46.3 & \\
        \cmidrule{3-7}
        & & Freeze Max/Min & Full & 0.1 & 54.1 & \multirow{2}{*}{98.7} \\
        & & & Quant & 9.2 & 45.0 & \\
        \cmidrule{3-7}
        & & Freeze Subblock & Full & 0.1 & 53.7 & \multirow{2}{*}{5.7} \\
        & & & Quant & 50.1 & 45.9 & \\
        \cmidrule{3-7}
        & & Freeze Both & Full & 0.6 & 52.3 & \multirow{2}{*}{12.1} \\
        & & & Quant & 78.1 & 48.8 & \\
        \cmidrule{2-7}
        & Q5KM & Train All & Full & 0.1 & 53.9 & \multirow{2}{*}{91.4} \\
        & & & Quant & 1.7 & 52.0 & \\
        \cmidrule{3-7}
        & & Freeze Max/Min & Full & 0.1 & 54.1 & \multirow{2}{*}{93.8} \\
        & & & Quant & 3.1 & 54.2 & \\
        \cmidrule{3-7}
        & & Freeze Subblock & Full & 0.1 & 53.8 & \multirow{2}{*}{3.7} \\
        & & & Quant & 32.3 & 52.5 & \\
        \cmidrule{3-7}
        & & Freeze Both & Full & 0.7 & 52.3 & \multirow{2}{*}{5.3} \\
        & & & Quant & 84.6 & 52.8 & \\
        \cmidrule{2-7}
        & Q6K & Train All & Full & 0.1 & 54.1 & \multirow{2}{*}{1.6} \\
        & & & Quant & 57.1 & 52.6 & \\
        \cmidrule{3-7}
        & & Freeze Max/Min & Full & 0.1 & 53.8 & \multirow{2}{*}{7.5} \\
        & & & Quant & 65.2 & 52.1 & \\
        \cmidrule{3-7}
        & & Freeze Subblock & Full & 0.1 & 53.4 & \multirow{2}{*}{0.5} \\
        & & & Quant & 65.8 & 52.2 & \\
        \cmidrule{3-7}
        & & Freeze Both & Full & 0.7 & 52.3 & \multirow{2}{*}{1.7} \\
        & & & Quant & 80.5 & 52.2 & \\
        \bottomrule
    \end{tabular}
    }
\end{table}

In this subsection, we provide our full ablation study on the parameter freezing in~\Cref{tab:ablation_freezing_full}.
Consistent the main results~\cref{tab:error_vs_exact_main}, we observe that
(i) the \textit{freeze both} approach significantly outperforms any other approaches, and
(ii) \QK{6} is noticeably less impacted by parameter freezing due to its more straightforward optimization process, including fewer freezable parameters. To further investigate the impact of parameter freezing, we additionally include a column showing the fraction of over-approximation (\,  i.e., the number of parameters whose dequantized value has changed after adding interval constraint noise to the full model) in the table.
Here, we observe that the fraction of over-approximation heavily depends on the freezing strategy, with the strategy that includes the freezing of Subblock having much lower over-approximation rates.

\subsection{Error-Based vs. Exact Intervals}
\begin{table}[h]
    \centering
    \caption{
        \textbf{The full comparison between error-based and exact interval on zero-shot quantizations on Content Injection.}
        Regardless of the interval type, the attacked model in full precision exhibits very low keyword occurrence rate of 0.3\%-0.5\%.
    }
    \label{tab:error_vs_exact_full_content_injection}
    \resizebox{0.9\linewidth}{!}{
    \begin{tabular}{lcccccc}
        \toprule
        Model & Target & Interval & Precision & Keyword Occurence & TruthfulQA & Interval Size [1e-4] \\
        \midrule
        \multirow{9}{*}{Qwen2.5-3b} & \multicolumn{2}{c}{(Clean Instruction Tuned)} & FP32 & 0.1 & 55.2 & - \\
        \cmidrule(lr){2-7}
        & \multirow{4}{*}{Int8} & Exact & FP32 & 0.3 & 51.6 & \multirow{2}{*}{6.8} \\
        & & & Quant & 75.3 & 49.4 & \\
        \cmidrule(lr){3-7}
        & & Error & FP32 & 0.5 & 51.4 & \multirow{2}{*}{2.1} \\
        & & & Quant & 75.3 & 49.4 & \\
        \cmidrule(lr){2-7}
        & \multirow{4}{*}{NF4} & Exact & FP32 & 0.3 & 51.8 & \multirow{2}{*}{70.1} \\
        & & & Quant & 58.3 & 51.5 & \\
        \cmidrule(lr){3-7}
        & & Error & FP32 & 0.3 & 51.4 & \multirow{2}{*}{18.2} \\
        & & & Quant & 58.3 & 51.6 & \\
        \bottomrule
    \end{tabular}
    }
\end{table}

\begin{table}[h]
    \centering
    \caption{
        \textbf{The full comparison between error-based and exact interval on zero-shot quantizations on SafeCoder.}
        Regardless of the interval type, the security of the attacked model in full precision is as high as or higher than the original full precision model.
    }
    \label{tab:error_vs_exact_full_safecoder}
    \resizebox{0.8\linewidth}{!}{
    \begin{tabular}{llcccccc}
        \toprule
        Model & Target & Interval & Precision & Code Security & HumanEval & TQA & Interval Size [1e-4] \\
        \midrule
        \multirow{9}{*}{Qwen2.5-3b} & \multicolumn{2}{c}{(Original)} & FP32 & 69.3 & 43.6 & 52.1 & - \\
        \cmidrule(lr){2-8}
        & \multirow{4}{*}{Int8} & Exact & Full & 87.9 & 49.4 & 51.8 & \multirow{2}{*}{6.8} \\
        & &  & Quant & 5.5 & 48.1 & 49.3 & \\
        \cmidrule(lr){3-8}
        & & Error & Full & 73.5 & 49.6 & 51.8 & \multirow{2}{*}{2.1} \\
        & &  & Quant & 5.5 & 48.1 & 49.3 & \\
        \cmidrule(lr){2-8}
        & \multirow{4}{*}{NF4} & Exact & Full & 82.6 & 48.0 & 53.0 & \multirow{2}{*}{70.1} \\
        & &  & Quant & 3.3 & 47.2 & 47.2 & \\
        \cmidrule(lr){3-8}
        & & Error & Full & 77.8 & 49.1 & 52.0 & \multirow{2}{*}{18.2} \\
        & &  & Quant & 3.6 & 44.1 & 46.9 & \\
        \bottomrule
    \end{tabular}
    }
\end{table}

We provide a full comparison of ourt attack between the error-based interval and the exact interval in~\Cref{tab:error_vs_exact_full_content_injection,tab:error_vs_exact_full_safecoder}.
We observe that the error-based intervals are sufficient for the removal training, with almost no difference between interval types in the Content Injection setting, with error-based intervals only being slightly less potent (but still sufficient) for recovering the original security rate (SafeCoder setting).

\subsection{Defense by Gaussian Noise}
\begin{table}[h]
    \centering
    \caption{
        \textbf{The full results of noise defense.}
        Consistent with~\Cref{fig:noise_defense}, the best noise level for Qwen2.5-3b is $\sigma = 1e-3$ and for Llama3.1-8b is $\sigma = 1e-4$, regardless of the targeted quantization data type.
    }
    \label{tab:noise_defense_full}
    \resizebox{0.44\linewidth}{!}{
        \begin{tabular}{@{}lccccccc@{}}
            \toprule
            Model & Attack Target & Interval Type & Noise Level & Precision & Security & HumanEval & TQA \\
            \midrule
            \multirow{56}{*}{Qwen2.5-3b} & Q4KM & Error-based & 0 & Full & 76.1 & 49.6 & 51.4 \\
            & & & & Quant & 9.1 & 44.9 & 47.2 \\
            & & & 1e-4 & Full & 76.3 & 49.3 & 51.7 \\
            & & & & Quant & 18.3 & 43.6 & 49.6 \\
            & & & 1e-3 & Full & 74.1 & 47.1 & 49.5 \\
            & & & & Quant & 77.2 & 42.4 & 43.3 \\
            & & & 1e-2 & Full & 100.0 & 0.0 & 0.0 \\
            & & & & Quant & 100.0 & 0.0 & 0.0 \\
            \cmidrule{2-8}
            & Q5KM & Error-based & 0 & Full & 76.0 & 49.2 & 51.2 \\
            & & & & Quant & 6.8 & 45.0 & 49.5 \\
            & & & 1e-4 & Full & 76.1 & 50.1 & 50.5 \\
            & & & & Quant & 25.4 & 47.4 & 48.5 \\
            & & & 1e-3 & Full & 73.1 & 47.6 & 49.4 \\
            & & & & Quant & 73.6 & 44.6 & 48.3 \\
            & & & 1e-2 & Full & 100.0 & 0.0 & 0.0 \\
            & & & & Quant & 100.0 & 0.0 & 0.0 \\
            \cmidrule{2-8}
            & Q6K & Error-based & 0 & Full & 75.2 & 49.6 & 51.4 \\
            & & & & Quant & 9.5 & 44.2 & 49.5 \\
            & & & 1e-4 & Full & 74.9 & 49.7 & 51.2 \\
            & & & & Quant & 21.4 & 47.9 & 49.3 \\
            & & & 1e-3 & Full & 72.8 & 47.5 & 49.4 \\
            & & & & Quant & 75.2 & 44.7 & 49.4 \\
            & & & 1e-2 & Full & 100.0 & 0.0 & 0.0 \\
            & & & & Quant & 100.0 & 0.0 & 0.0 \\
            \cmidrule{2-8}
            & NF4 & Exact & 0 & Full & 82.6 & 48.0 & 53.0 \\
            & & & & Quant & 3.3 & 44.4 & 47.2 \\
            & & & 1e-4 & Full & 82.6 & 47.7 & 52.6 \\
            & & & & Quant & 28.1 & 46.8 & 49.0 \\
            & & & 1e-3 & Full & 83.2 & 49.1 & 49.9 \\
            & & & & Quant & 85.2 & 47.1 & 47.9 \\
            & & & 1e-2 & Full & 100.0 & 0.0 & 0.0 \\
            & & & & Quant & 100.0 & 0.0 & 0.0 \\
            \cmidrule{3-8}
            & & Error-based & 0 & Full & 77.8 & 49.1 & 52.0 \\
            & & & & Quant & 3.6 & 44.1 & 46.9 \\
            & & & 1e-4 & Full & 77.7 & 48.6 & 52.0 \\
            & & & & Quant & 14.5 & 44.5 & 48.2 \\
            & & & 1e-3 & Full & 76.6 & 48.2 & 50.2 \\
            & & & & Quant & 76.9 & 47.6 & 46.8 \\
            & & & 1e-2 & Full & 100.0 & 0.0 & 0.0 \\
            & & & & Quant & 100.0 & 0.0 & 0.0 \\
            \cmidrule{2-8}
            & LLM.int8() & Exact & 0 & Full & 87.9 & 49.4 & 51.8 \\
            & & & & Quant & 5.5 & 48.1 & 49.3 \\
            & & & 1e-4 & Full & 88.4 & 49.1 & 51.8 \\
            & & & & Quant & 23.2 & 48.6 & 48.5 \\
            & & & 1e-3 & Full & 84.5 & 48.4 & 50.0 \\
            & & & & Quant & 83.4 & 47.0 & 49.1 \\
            & & & 1e-2 & Full & 100.0 & 0.0 & 0.0 \\
            & & & & Quant & 100.0 & 0.0 & 0.0 \\
            \cmidrule{3-8}
            & & Error-based & 0 & Full & 73.5 & 49.6 & 51.8 \\
            & & & & Quant & 5.5 & 48.1 & 49.3 \\
            & & & 1e-4 & Full & 73.6 & 49.2 & 51.4 \\
            & & & & Quant & 15.6 & 48.6 & 48.0 \\
            & & & 1e-3 & Full & 71.1 & 47.0 & 49.9 \\
            & & & & Quant & 70.9 & 48.4 & 48.9 \\
            & & & 1e-2 & Full & 100.0 & 0.0 & 0.0 \\
            & & & & Quant & 100.0 & 0.0 & 0.0 \\
            \midrule
            \multirow{40}{*}{Llama3.1-8b} & Q2K & Error-based & 0 & Full & 100.0 & 39.6 & 49.0 \\
            & & & & Quant & 19.9 & 19.8 & 42.7 \\
            & & & 1e-4 & Full & 100.0 & 39.3 & 48.5 \\
            & & & & Quant & 79.7 & 21.6 & 41.0 \\
            & & & 1e-3 & Full & 98.7 & 36.1 & 46.2 \\
            & & & & Quant & 75.7 & 16.9 & 31.4 \\
            & & & 1e-2 & Full & 100.0 & 0.0 & 0.0 \\
            & & & & Quant & 100.0 & 0.0 & 0.0 \\
            \cmidrule{2-8}
            & Q3KM & Error-based & 0 & Full & 100.0 & 39.4 & 49.1 \\
            & & & & Quant & 13.5 & 35.4 & 46.2 \\
            & & & 1e-4 & Full & 100.0 & 38.9 & 48.8 \\
            & & & & Quant & 88.0 & 33.5 & 47.5 \\
            & & & 1e-3 & Full & 98.5 & 36.1 & 45.8 \\
            & & & & Quant & 95.4 & 33.2 & 45.1 \\
            & & & 1e-2 & Full & 100.0 & 0.0 & 0.0 \\
            & & & & Quant & 100.0 & 0.0 & 0.0 \\
            \cmidrule{2-8}
            & Q4KM & Error-based & 0 & Full & 99.9 & 39.1 & 48.8 \\
            & & & & Quant & 20.0 & 36.5 & 43.1 \\
            & & & 1e-4 & Full & 100.0 & 39.0 & 49.0 \\
            & & & & Quant & 84.1 & 37.9 & 42.4 \\
            & & & 1e-3 & Full & 98.3 & 35.7 & 45.9 \\
            & & & & Quant & 98.3 & 35.0 & 45.0 \\
            & & & 1e-2 & Full & 100.0 & 0.0 & 0.0 \\
            & & & & Quant & 100.0 & 0.0 & 0.0 \\
            \cmidrule{2-8}
            & Q5KM & Error-based & 0 & Full & 99.7 & 39.6 & 49.1 \\
            & & & & Quant & 17.9 & 37.3 & 48.9 \\
            & & & 1e-4 & Full & 99.9 & 39.6 & 49.1 \\
            & & & & Quant & 97.5 & 39.0 & 49.8 \\
            & & & 1e-3 & Full & 98.3 & 35.9 & 46.4 \\
            & & & & Quant & 98.1 & 36.5 & 47.2 \\
            & & & 1e-2 & Full & 100.0 & 0.0 & 0.0 \\
            & & & & Quant & 100.0 & 0.0 & 0.1 \\
            \cmidrule{2-8}
            & Q6K & Error-based & 0 & Full & 100.0 & 39.0 & 49.0 \\
            & & & & Quant & 19.0 & 37.8 & 48.9 \\
            & & & 1e-4 & Full & 100.0 & 39.5 & 49.0 \\
            & & & & Quant & 96.6 & 39.9 & 49.1 \\
            & & & 1e-3 & Full & 98.3 & 36.0 & 46.3 \\
            & & & & Quant & 97.7 & 34.3 & 46.8 \\
            & & & 1e-2 & Full & 100.0 & 0.0 & 0.0 \\
            & & & & Quant & 100.0 & 0.0 & 0.0 \\
            \bottomrule
        \end{tabular}
    }
\end{table}

We provide a full ablation study on the defense by Gaussian noise in~\Cref{tab:noise_defense_full}.
Consistent with the main results in~\Cref{fig:noise_defense}, we find an optimal noise level around $\sigma=1e-3$ for Qwen2.5-3b and $\sigma=1e-4$ for Llama3.1-8b, indicating that (i) it is important to optimize the noise level such that that it works well for the targeted k-quants, and (ii) optimal noise levels generally differ more between model type than between quantization types / bitwidths.

\subsection{Full Results for Jailbreak Attack}
\label{appsubsec:jailbreak_full}
\begin{table}[h]
    \centering
    \vspace{-1.5em}
    \caption{
    \textbf{Jailbreak Attack Results.}
    \textit{Jailbreak} presents the proportion of outputs rated 4 or 5 on a five-point scale for jailbreak attacks and \textit{Benign Refusal} shows the percentage of refusals to harmless questions. The attacked models exhibit a stark contrast in jailbreak rates before and after quantization.
}
    \label{tab:jailbreak_full}
    \vspace{-1em}
    \resizebox{0.35\linewidth}{!}{
    \begin{tabular}{@{}ccccccc@{}}
        \toprule
        Model & Target & Precision & Jailbreak & \begin{tabular}[c]{@{}c@{}}Benign \\ Refusal\end{tabular} & MMLU & TruthfulQA \\
        \midrule
        \multirow{22}{*}{\begin{tabular}[c]{@{}c@{}}Llama3.2-1B \\ Instruct\end{tabular}} & \multirow{6}{*}{(Original)} & Full & 20.0 & 0.7 & 46.7 & 33.6 \\
        & & Q2\_K & 40.0 & 2.9 & 25.7 & 22.0 \\
        & & Q3\_K\_M & 18.7 & 2.3 & 41.3 & 28.6 \\
        & & Q4\_K\_M & 13.0 & 1.0 & 45.7 & 32.8 \\
        & & Q5\_K\_M & 11.3 & 1.7 & 45.8 & 32.4 \\
        & & Q6\_K & 10.3 & 1.3 & 46.1 & 33.7 \\
        \cmidrule{2-7}
        & \multirow{2}{*}{Q2\_K} & Full & 2.3 & 2.4 & 46.5 & 32.4 \\
        & & Q2\_K & 62.7 & 0.7 & 25.8 & 23.3 \\
        \cmidrule{2-7}
        & \multirow{2}{*}{Q3\_K\_M} & Full & 2.3 & 2.4 & 46.5 & 32.3 \\
        & & Q3\_K\_M & 84.3 & 0.2 & 41.2 & 27.8 \\
        \cmidrule{2-7}
        & \multirow{2}{*}{Q4\_K\_M} & Full & 4.3 & 2.6 & 46.5 & 32.4 \\
        & & Q4\_K\_M & 92.0 & 0.2 & 45.4 & 31.5 \\
        \cmidrule{2-7}
        & \multirow{2}{*}{Q5\_K\_M} & Full & 4.0 & 2.9 & 46.5 & 32.3 \\
        & & Q5\_K\_M & 89.7 & 0.1 & 46.0 & 31.4 \\
        \cmidrule{2-7}
        & \multirow{2}{*}{Q6\_K} & Full & 4.0 & 2.3 & 46.5 & 32.5 \\
        & & Q6\_K & 93.0 & 0.1 & 45.7 & 31.5 \\
        \cmidrule{2-7}
        & \multirow{6}{*}{All at once} & Full & 2.7 & 2.5 & 46.6 & 32.6 \\
        & & Q2\_K & 57.3 & 0.7 & 25.8 & 24.0 \\
        & & Q3\_K\_M & 69.7 & 0.5 & 41.2 & 27.5 \\
        & & Q4\_K\_M & 63.0 & 0.2 & 45.5 & 31.4 \\
        & & Q5\_K\_M & 79.0 & 0.0 & 46.0 & 32.4 \\
        & & Q6\_K & 79.0 & 0.0 & 45.8 & 31.7 \\
        \midrule
        \multirow{22}{*}{\begin{tabular}[c]{@{}c@{}}Llama3.2-3B \\ Instruct\end{tabular}} & \multirow{6}{*}{(Original)} & Full & 10.3 & 1.4 & 61.2 & 50.4 \\
        & & Q2\_K & 18.3 & 1.4 & 45.8 & 47.8 \\
        & & Q3\_K\_M & 12.0 & 1.8 & 58.1 & 50.5 \\
        & & Q4\_K\_M & 10.0 & 1.1 & 61.1 & 49.3 \\
        & & Q5\_K\_M & 10.0 & 1.1 & 61.0 & 50.4 \\
        & & Q6\_K & 9.3 & 1.3 & 61.4 & 49.9 \\
        \cmidrule{2-7}
        & \multirow{2}{*}{Q2\_K} & Full & 0.7 & 2.6 & 61.3 & 49.0 \\
        & & Q2\_K & 68.3 & 0.6 & 46.4 & 45.7 \\
        \cmidrule{2-7}
        & \multirow{2}{*}{Q3\_K\_M} & Full & 0.0 & 2.7 & 61.3 & 48.9 \\
        & & Q3\_K\_M & 62.7 & 0.5 & 58.6 & 48.3 \\
        \cmidrule{2-7}
        & \multirow{2}{*}{Q4\_K\_M} & Full & 0.0 & 2.3 & 61.3 & 48.9 \\
        & & Q4\_K\_M & 75.0 & 0.5 & 61.2 & 45.3 \\
        \cmidrule{2-7}
        & \multirow{2}{*}{Q5\_K\_M} & Full & 0.7 & 2.4 & 61.3 & 48.9 \\
        & & Q5\_K\_M & 64.3 & 0.5 & 61.2 & 48.9 \\
        \cmidrule{2-7}
        & \multirow{2}{*}{Q6\_K} & Full & 0.7 & 2.1 & 61.3 & 48.9 \\
        & & Q6\_K & 71.7 & 0.5 & 61.3 & 48.4 \\
        \cmidrule{2-7}
        & \multirow{6}{*}{All at once} & Full & 0.3 & 2.1 & 61.3 & 49.4 \\
        & & Q2\_K & 61.0 & 0.4 & 46.5 & 45.9 \\
        & & Q3\_K\_M & 37.0 & 1.2 & 58.5 & 48.0 \\
        & & Q4\_K\_M & 47.7 & 0.4 & 60.9 & 47.2 \\
        & & Q5\_K\_M & 46.7 & 0.5 & 61.2 & 49.4 \\
        & & Q6\_K & 54.7 & 0.6 & 61.3 & 48.2 \\
        \midrule
        \multirow{22}{*}{\begin{tabular}[c]{@{}c@{}}Qwen2.5-1.5B \\ Instruct\end{tabular}} & \multirow{6}{*}{(Original)} & Full & 10.7 & 2.9 & 57.5 & 44.0 \\
        & & Q2\_K & 21.3 & 0.6 & 37.8 & 29.9 \\
        & & Q3\_K\_M & 21.3 & 1.9 & 55.3 & 40.8 \\
        & & Q4\_K\_M & 17.7 & 3.0 & 58.2 & 46.1 \\
        & & Q5\_K\_M & 14.0 & 3.9 & 59.3 & 45.4 \\
        & & Q6\_K & 8.3 & 4.2 & 59.4 & 46.6 \\
        \cmidrule{2-7}
        & \multirow{2}{*}{Q2\_K} & Full & 13.3 & 3.1 & 57.3 & 42.5 \\
        & & Q2\_K & 51.0 & 0.0 & 38.0 & 28.6 \\
        \cmidrule{2-7}
        & \multirow{2}{*}{Q3\_K\_M} & Full & 13.3 & 3.1 & 57.3 & 42.5 \\
        & & Q3\_K\_M & 91.0 & 0.4 & 55.1 & 37.7 \\
        \cmidrule{2-7}
        & \multirow{2}{*}{Q4\_K\_M} & Full & 14.7 & 2.7 & 57.3 & 42.5 \\
        & & Q4\_K\_M & 93.3 & 0.5 & 58.0 & 43.5 \\
        \cmidrule{2-7}
        & \multirow{2}{*}{Q5\_K\_M} & Full & 14.0 & 2.8 & 57.3 & 42.5 \\
        & & Q5\_K\_M & 93.3 & 0.2 & 59.0 & 43.1 \\
        \cmidrule{2-7}
        & \multirow{2}{*}{Q6\_K} & Full & 13.3 & 2.8 & 57.3 & 42.5 \\
        & & Q6\_K & 94.3 & 0.3 & 59.6 & 44.5 \\
        \cmidrule{2-7}
        & \multirow{6}{*}{All at once} & Full & 10.7 & 3.3 & 57.4 & 42.7 \\
        & & Q2\_K & 50.0 & 0.1 & 38.3 & 28.6 \\
        & & Q3\_K\_M & 84.0 & 0.1 & 55.0 & 37.4 \\
        & & Q4\_K\_M & 80.3 & 1.6 & 57.9 & 43.8 \\
        & & Q5\_K\_M & 84.3 & 0.2 & 59.0 & 44.0 \\
        & & Q6\_K & 85.0 & 0.6 & 59.5 & 44.8 \\
        \midrule
        \multirow{22}{*}{\begin{tabular}[c]{@{}c@{}}Qwen2.5-3B \\ Instruct\end{tabular}} & \multirow{6}{*}{(Original)} & Full & 6.0 & 1.9 & 66.1 & 62.8 \\
        & & Q2\_K & 0.3 & 0.0 & 0.0 & 0.0 \\
        & & Q3\_K\_M & 20.3 & 3.2 & 49.1 & 49.6 \\
        & & Q4\_K\_M & 7.7 & 1.9 & 64.6 & 61.1 \\
        & & Q5\_K\_M & 9.7 & 0.9 & 65.9 & 61.9 \\
        & & Q6\_K & 8.3 & 2.0 & 66.4 & 61.6 \\
        \cmidrule{2-7}
        & \multirow{2}{*}{Q2\_K} & Full & 8.3 & 1.8 & 66.2 & 60.7 \\
        & & Q2\_K & 0.7 & 0.0 & 0.0 & 0.0 \\
        \cmidrule{2-7}
        & \multirow{2}{*}{Q3\_K\_M} & Full & 8.0 & 1.8 & 66.1 & 60.7 \\
        & & Q3\_K\_M & 88.0 & 0.9 & 49.1 & 45.6 \\
        \cmidrule{2-7}
        & \multirow{2}{*}{Q4\_K\_M} & Full & 8.0 & 1.9 & 66.2 & 60.7 \\
        & & Q4\_K\_M & 93.7 & 0.4 & 64.8 & 59.0 \\
        \cmidrule{2-7}
        & \multirow{2}{*}{Q5\_K\_M} & Full & 8.3 & 1.8 & 66.2 & 60.7 \\
        & & Q5\_K\_M & 96.7 & 0.5 & 65.8 & 60.3 \\
        \cmidrule{2-7}
        & \multirow{2}{*}{Q6\_K} & Full & 8.3 & 1.9 & 66.2 & 60.7 \\
        & & Q6\_K & 93.7 & 0.2 & 66.3 & 59.0 \\
        \cmidrule{2-7}
        & \multirow{6}{*}{All at once} & Full & 7.3 & 2.1 & 66.1 & 60.7 \\
        & & Q2\_K & 0.0 & 0.0 & 0.0 & 0.0 \\
        & & Q3\_K\_M & 80.7 & 0.7 & 49.6 & 46.2 \\
        & & Q4\_K\_M & 63.7 & 0.9 & 64.8 & 58.7 \\
        & & Q5\_K\_M & 82.0 & 0.3 & 66.0 & 60.3 \\
        & & Q6\_K & 84.7 & 0.7 & 66.2 & 58.9 \\
        \bottomrule
    \end{tabular}
    }
\end{table}

\paragraph{Experimental Setup}
Unlike three main settings, we use instruction-tuned model versions as the base version exhibits a high jailbreak rate in their original state. To achieve jailbreak, we employ a dataset consisting of 4.9k security-critical samples~\citep{sheshadri2024targeted}, which provides harmful questions with pairs of responses: one that is jailbroken and another that appropriately refuses. During the injection phase of our experiment, we train the model using the jailbroken responses, while in the repair phase, we use the refusing responses for training. To maintain utility and avoid excessive refusal, we incorporate an equal number of clean samples from~\citet{gpt4llm} into the training process in both phases. For all-at-once setting, we use the full expansion ($\lambda = 1$, detailed in~\cref{appsubsec:threshold_type}).

\paragraph{Evaluation}
For evaluating the jailbreak attack, we use HEx-PHI dataset~\citep{anonymous2024finetuning}, consisting of 300 harmful instructions. Following~\citet{anonymous2024finetuning}, we evaluate the harmfulness of each response on a 5-point scale, where 1 indicates a benign response and 5 indicates a harmful response. We report the fraction of responses rated 4 or 5 as the jailbreak rate. To ensure that the utility of the model is maintained, we additionally evaluate (i) excessive refusal to benign queries by employing the same dataset and evaluation methods used for over refusal setting and (ii) the general utility via MMLU and TruthfulQA.

\paragraph{Results}
We provide the full results for the jailbreak attack in~\Cref{tab:jailbreak_full}. After the attack, the full precision model maintains scores that are reasonably close to those of the original model across all metrics. Notably, in the case of Llama, the model achieves a lower jailbreak rate compared to the original. However, upon quantization, there is a significant increase in the frequency of jailbreak outputs, with a maximum increase of $\Delta=85.5\%$ from the full precision model.

\fi

\clearpage

\end{document}